\journal{Medical Image Analysis}
\let\cite\citep
\newcolumntype{L}[1]{>{\raggedright\arraybackslash}p{#1}}
\newcolumntype{R}[1]{>{\raggedleft\arraybackslash}p{#1}}
\newcolumntype{C}[1]{>{\centering\arraybackslash}p{#1}}
\DeclareMathOperator{\tr}{tr}
\newcommand{\norm}[1]{\left\lVert#1\right\rVert}
\newcommand{\abs}[1]{\left\lvert#1\right\rvert}
\newcommand{\scal}[2]{\left\langle#1, #2\right\rangle}
\DeclareMathOperator*{\argmax}{arg\,max}
\DeclareMathOperator*{\argmin}{arg\,min}
\newcommand{\reference}[1]{\bar{#1}}
\newcommand{\matop}{}
\newcommand{\SO}[1][3]{\text{SO}(#1)}
\newcommand{\Sym}[1][3]{\text{Sym}^+(#1)}
\newcommand{\bbbbr}{\mathbb{R}}
\newcommand{\revisedIII}[1]{{#1}}
\newcommand{\revisedII}[1]{#1}
\newcommand{\revised}[1]{#1}
\newcommand{\tikzAngleOfLine}{\tikz@AngleOfLine}
\def\tikz@AngleOfLine(#1)(#2)#3{%
  \pgfmathanglebetweenpoints{%
    \pgfpointanchor{#1}{center}}{%
    \pgfpointanchor{#2}{center}}
  \pgfmathsetmacro{#3}{\pgfmathresult}%
}  
\tikzset{test/.style n args={3}{
    postaction={
    decorate,
    decoration={
    markings,
    mark=between positions 0 and \pgfdecoratedpathlength step 0.2pt with {
    \pgfmathsetmacro\myval{multiply(
        divide(
        \pgfkeysvalueof{/pgf/decoration/mark info/distance from start}, \pgfdecoratedpathlength
        ),
        100
    )};
    \pgfsetfillcolor{#3!\myval!#2};
    \pgfpathcircle{\pgfpointorigin}{#1};
    \pgfusepath{fill};}
}}}}
\tikzset{test123/.style n args={3}{
    postaction={
    decorate,
    decoration={
    markings,
    mark=between positions 0 and \pgfdecoratedpathlength step 0.2pt with {
    \pgfmathsetmacro\myval{multiply(
        divide(
        \pgfkeysvalueof{/pgf/decoration/mark info/distance from start}, \pgfdecoratedpathlength
        ),
        100
    )};
    \pgfsetfillcolor{#3!\myval!#2};
    \pgfpathcircle{\pgfpointorigin}{#1};
    \pgfusepath{fill};}
}}}}
\definecolor{backtriangleA}{HTML}{F2F2F6}  
\definecolor{backtriangleB}{HTML}{F7F7FB}  
\definecolor{backtriangleC}{HTML}{E5E5E9}  
\definecolor{backtriangleD}{HTML}{C8C8CC}  
\definecolor{backtriangleE}{HTML}{D4D4D8}  
\definecolor{backtriangleF}{HTML}{AFAFB1}  
\definecolor{backtriangleG}{HTML}{A7A7AA}  
\definecolor{backtriangleH}{HTML}{A2A2A4}  
\definecolor{backtriangleI}{HTML}{A1A1A4}
\definecolor{backtriangleJ}{HTML}{FEFEFF}
\definecolor{backtriangle}{RGB}{252,252,255}  
\definecolor{fronttriangle}{HTML}{FCE0F1} %
\definecolor{c6}{HTML}{D1E9F8}
\definecolor{c5}{HTML}{D9ECF9}
\definecolor{c4}{HTML}{E1F0FB}
\definecolor{c3}{HTML}{EAF4FC}
\definecolor{c2}{HTML}{F3F8FD}
\definecolor{c1}{HTML}{FCFCFF}
\definecolor{cc6}{HTML}{FFFFFF}
\definecolor{cc5}{HTML}{FEF9FC}
\definecolor{cc4}{HTML}{FEF3FA}
\definecolor{cc3}{HTML}{FDEDF7}
\definecolor{cc2}{HTML}{FDE7F4}
\definecolor{cc1}{HTML}{FCE0F1}
\newcommand{\gettikzxy}[3]{%
  \tikz@scan@one@point\pgfutil@firstofone#1\relax
  \edef#2{\the\pgf@x}%
  \edef#3{\the\pgf@y}%
}
\def\convertto#1#2{\strip@pt\dimexpr #2*65536/\number\dimexpr 1#1}
\newlength{\heightOfB}
\DeclareRobustCommand{\ShowColormap}{\raisebox{-0.14em}{\includegraphics[height=\heightOfB]{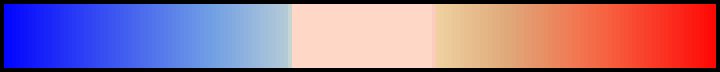}}}
\DeclareRobustCommand{\ShowColormapOAI}{\raisebox{-0.14em}{\includegraphics[height=\heightOfB]{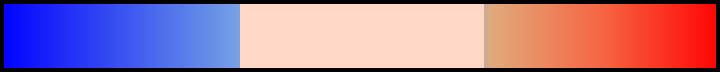}}}
\begin{document}

\begin{frontmatter}
\title{Rigid Motion Invariant Statistical Shape Modeling based on Discrete Fundamental Forms
\\\vspace{10pt}\normalsize{{Data from the Osteoarthritis Initiative and the Alzheimer's Disease Neuroimaging Initiative}}
}

\author[ZIB]{Felix Ambellan\corref{mycorrespondingauthor}}
\ead{ambellan@zib.de, orcid 0000-0001-9415-0859}

\author[ZIB,1KS]{Stefan Zachow}
\ead{zachow@zib.de, orcid 0000-0001-7964-3049}

\author[FU]{Christoph von Tycowicz}
\ead{vontycowicz@zib.de, orcid 0000-0002-1447-4069}

\address[ZIB]{Zuse Institute Berlin, Berlin, Germany}
\address[1KS]{1000shapes GmbH, Berlin, Germany}
\address[FU]{Freie Universit\"at Berlin, Berlin, Germany}

\cortext[mycorrespondingauthor]{Corresponding author}

\begin{abstract}
We present a novel approach for nonlinear statistical shape modeling that is invariant under Euclidean motion and thus alignment-free.
By analyzing metric distortion and curvature of shapes as elements of Lie groups in a consistent Riemannian setting, we construct a framework that reliably handles large deformations.
Due to the explicit character of Lie group operations, our non-Euclidean method is very efficient allowing for fast and numerically robust processing.
This facilitates Riemannian analysis of large shape populations accessible through longitudinal and multi-site imaging studies providing increased statistical power.
\revised{Additionally, as planar configurations form a submanifold in shape space, our representation allows for effective estimation of quasi-isometric surfaces flattenings.
We evaluate the performance of our model w.r.t.\ shape-based classification of hippocampus and femur malformations due to Alzheimer's disease and osteoarthritis, respectively.
}
\revisedII{In particular, we outperform state-of-the-art classifiers based on geometric deep learning as well as statistical shape modeling especially in presence of sparse training data.}
To provide insight into the model's ability of capturing biological shape variability, we carry out an analysis of specificity and generalization ability.
\end{abstract}
\begin{keyword}
Statistical Shape Analysis\sep Principal Geodesic Analysis\sep Lie Groups\sep Classification\sep Manifold Valued Statistics
\end{keyword}
\end{frontmatter}
\section{Introduction}\label{sec:introduction}

Statistical shape models (SSMs) have become an essential tool for medical image analysis with a wide range of applications such as segmentation of anatomical structures, computer-aided diagnosis, and therapy planning.
SSMs describe the geometric variability in a population in terms of a mean shape and a hierarchy of major modes explaining the main trends of shape variation.
Based on a notion of shape space, SSMs can be learned from a database of consistently parametrized instances from the object class under study.
The resulting models provide a shape prior that can be used to constrain synthesis and analysis problems.
Moreover, their parameter space provides a compact representation that is amenable to learning algorithms (e.g. classification or clustering), evaluation, and exploration.

Standard SSMs treat the space of shapes as a Euclidean vector space allowing for linear statistics to be applied (see  e.g.~\citet{heimann2009SSMreview} and the references therein).
Linear methods, however, are often inadequate for capturing the high variability in biological shapes~\cite{davis2010regression}.
Nonlinear approaches have been developed based on geometric as well as physical concepts such as Hausdorff distance~\cite{charpiat2006distance}, elasticity~\cite{RuWi11,vonTycowicz2015averaging,zhang2015shellPCA}, and viscous flows~\cite{fuchs2009viscousMetric,brandt2016fairing,heeren2018ShellPGA}.
In general, these methods lack numerical robustness as well as fast response rates limiting their practical applicability especially in interactive applications.
To address these challenges, one line of work models shapes by a collection of primitives~\cite{fletcher2003statistics,freifeld2012LieBodies,AmbellanZachowvonTycowicz2019} %
that naturally belong to Lie groups and effectively describe local changes in shape.
Performing intrinsic calculus on the uncoupled primitives allows for fast computations while, at the same time, accounting for the nonlinearity in shape variation.
However, solving the inverse problem, i.e.\ mapping from primitives back to surface meshes, is generally non-trivial.
Recently, \citet{vonTycowicz2018efficient} presented a physically motivated approach based on differential coordinates for which the inverse problem is well-known and can be solved at linear cost.
Despite their inherent nonlinear structure, the employed representations are not invariant under Euclidean motion and, thus, analysis thereon suffers from bias due to arbitrary choices.
While the effect of rigid motions can be removed between pairs of shapes using alignment strategies, non-transitivity thereof prevents true group-wise alignment.

\revisedII{A related concept is to exploit the homogeneous structure of the ambient space and to encode displacements of points in terms of (e.g.\ rigid or affine) transformations~\cite{gilles2011frame,arsigny2003polyrigid,arsigny2009fast,mcleod2015spatio}.
Exploiting the redundancy of such representations present e.g.\ in articulated motion, these approaches provide low-dimensional encodings of deformations.
Considering the limit case of triangle-wise supported polyaffine/-rigid deformations is similar to simplicial maps underlying the construction in~\cite{freifeld2012LieBodies,vonTycowicz2018efficient,AmbellanZachowvonTycowicz2019} as well as our setup.
However, the latter employ differential characterizations of such maps that remove translational components and put local geometric variability into focus.
}

This work presents a novel shape representation based on discrete fundamental forms that is invariant under Euclidean motion.
We endow this representation with a Lie group structure that admits bi-invariant metrics and therefore allows for consistent analysis using manifold-valued statistics based on the Riemannian framework.
Furthermore, we derive a simple, efficient, robust, yet accurate (i.e.\ without resorting to model approximations) solver for the inverse problem that allows for interactive applications. \revised{Beyond statistical shape modeling the proposed framework is amenable for surface processing such as quasi-isometric flattening.} \revisedII{A publicly available implementation of the proposed model is given in the Morphomatics\footnote{\revisedIII{morphomatics.github.io}} library.}

Although in computer graphics and vision communities, rotation invariant differential coordinates have also been successfully employed for geometry processing applications, e.g.~\citet{kircher2008free,hasler2009statistical,gao2016}, these approaches fall short of a fully intrinsic treatment (e.g. due to lack of bi-invariant group structure and linearization) and have not been adapted to the field of SSMs.
\revisedII{A recent string of contributions investigates functional characterizations of intrinsic and extrinsic geometry~\cite{rustamov2013map,corman2017functional,wang2018steklov} to obtain shape descriptors.
While the underlying functional map framework alleviates the requirement on point-to-point correspondences, the reduced function spaces are based on low-frequency variations and, thus, prone to insensitivity for localized shape variability (such as osteophyte formation during the course of osteoarthritis).}

\section{Rotation Invariant Surface Representation}\label{sec:surfaceRepresentation}

In this section, we derive a discrete surface representation based on concepts from differential geometry of smooth surfaces.
\revised{This representation's key feature, its invariance under Euclidean motion and hence well-suitedness for shape analysis purposes, arises directly from discretization of surface theoretical results. Finally, the proposed representation setting exhibits a Lie group structure that we endow with a bi-invariant metric in order to ensure structural unity between Riemannian and Lie group framework} \revisedII{(see e.g.~\cite{pennec2020affineconnection}).}

\subsection{\revised{Fundamental Forms and Surface Theory}}\label{sec:relationSurfaceTheory}
To every smooth surface there uniquely exist two smoothly pointwise varying and symmetric bilinear forms on the tangent plane, the so called \emph{fundamental forms}.
The \emph{first} fundamental form $\mathrm{I}$ (a.k.a. metric tensor) is positive-definite and allows for angle, length and area measurement.
The \emph{second} fundamental form $\mathrm{I\!I}$ describes the curvature of the surface.
A prominent result in classical mathematics, the \emph{Fundamental Theorem of Surface Theory} according to Bonnet ($\approx$1860, e.g.~\citet{doCarmo1976geomCurveSurf}~Sec.~4.3), states that if given two symmetric bilinear forms (one of them positive-definite), s.t.\ for both certain integrability conditions hold (viz.\ the Gau\ss--Codazzi equations), then they (locally) determine uniquely, up to global rotation and translation, a surface embedded in three dimensional space with these two as its fundamental forms.
Therefore, a discrete description of the fundamental forms is an excellent candidate for a rotation-invariant surface representation. In the following, we will denote our proposed shape model as the \emph{fundamental coordinate model} (FCM).

\subsection{Discretization}\label{sec:discretization}
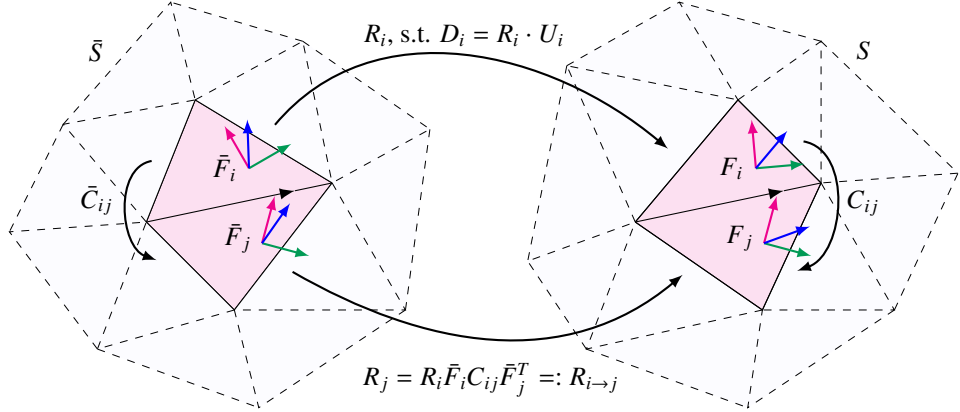
\begin{figure*}[t]
\centering
\begin{tikzpicture}[scale=0.65]

    \coordinate (A) at (0,0);
    \coordinate (B) at (1, 2.5);
    \coordinate (C) at (3.8, 0.8);
    \coordinate (D) at (1.8,-1.8);        
    
    \draw[fill=fronttriangle] (A) -- (B) -- (C);
    \draw[fill=fronttriangle] (A) -- (D) -- (C);    
    
    \coordinate (E1) at (-1.7,2.0);
    \coordinate (E2) at (3.2,3.7);
    \coordinate (E3) at (5.3,-1.8);
    \coordinate (E4) at (-1.0,-2.6);
    \coordinate (E5) at (0.5,4.5);
    \coordinate (E6) at (5.8,1.9);
    \coordinate (E7) at (2.3,-3.8);
    \coordinate (E8) at (-2.8,-0.2);
    
    \fill[backtriangle] (A) -- (E1) -- (B);
    \fill[backtriangle] (B) -- (E2) -- (C);
    \fill[backtriangle] (C) -- (E3) -- (D);
    \fill[backtriangle] (D) -- (E4) -- (A);
        
    \fill[backtriangle] (E1) -- (E5) -- (B);
    \fill[backtriangle] (E2) -- (E5) --(B);
    \fill[backtriangle] (E2) -- (E6) -- (C);
    \fill[backtriangle] (E3) -- (E6) -- (C);
    \fill[backtriangle] (E3) -- (E7) -- (D);
    \fill[backtriangle] (E4) -- (E7) -- (D);
    \fill[backtriangle] (E4) -- (E8) -- (A);
    \fill[backtriangle] (E1) -- (E8) -- (A);
    
    \draw[dashed, very thin] (A) -- (E1) -- (B);
    \draw[dashed, very thin] (B) -- (E2) -- (C);
    \draw[dashed, very thin] (C) -- (E3) -- (D);
    \draw[dashed, very thin] (D) -- (E4) -- (A);
    
    \draw[dashed, very thin] (E1) -- (E5) -- (B);
    \draw[dashed, very thin] (E2) -- (E5);
    \draw[dashed, very thin] (E2) -- (E6) -- (C);
    \draw[dashed, very thin] (E3) -- (E6);
    \draw[dashed, very thin] (E3) -- (E7) -- (D);
    \draw[dashed, very thin] (E4) -- (E7);
    \draw[dashed, very thin] (E4) -- (E8) -- (A);
    \draw[dashed, very thin] (E1) -- (E8);
    
    \draw (A) -- (B) -- (C);    
    \draw (A) -- (D) -- (C);
    
    \node (e) at ($(A)!0.85!(C)$) {};
    \node (e1) at ($(A)!0.75!(C)$) {};
    \node[above] (e2) at ($(A)!0.2!(C)$) {};
    \draw[-latexnew, arrowhead=0.2cm] (A) -- (e);
    \draw (e1) -- (C);
    
    \coordinate (AA) at ([xshift=10cm]A);
    \coordinate (BB) at ([xshift=11.1cm]B);
    \coordinate (CC) at ([xshift=10cm]C);
    \coordinate (DD) at ([xshift=10.8cm]D);
    
    \coordinate (EE1) at ([xshift=10.3cm, yshift=1.2cm]E1);
    \coordinate (EE2) at ([xshift=10.6cm]E2);
    \coordinate (EE3) at ([xshift=10cm]E3);
    \coordinate (EE4) at ([xshift=10cm]E4);
    \coordinate (EE5) at ([xshift=10.4cm, yshift=-.0cm]E5);
    \coordinate (EE6) at ([xshift=10.8cm, yshift=-.9cm]E6);
    \coordinate (EE7) at ([xshift=10cm]E7);
    \coordinate (EE8) at ([xshift=10.6cm, yshift=-.6cm]E8);
    
    \fill[backtriangle] (AA) -- (EE1) -- (BB);
    \fill[backtriangle] (BB) -- (EE2) -- (CC);
    \fill[backtriangle] (CC) -- (EE3) -- (DD);
    \fill[backtriangle] (DD) -- (EE4) -- (AA);
        
    \fill[backtriangle] (EE1) -- (EE5) -- (BB);
    \fill[backtriangle] (EE2) -- (EE5) -- (BB);
    \fill[backtriangle] (EE2) -- (EE6) -- (CC);
    \fill[backtriangle] (EE3) -- (EE6) -- (CC);
    \fill[backtriangle] (EE3) -- (EE7) -- (DD);
    \fill[backtriangle] (EE4) -- (EE7) -- (DD);
    \fill[backtriangle] (EE4) -- (EE8) -- (AA);
    \fill[backtriangle] (EE1) -- (EE8) -- (AA);
    
    \draw[dashed, very thin] (AA) -- (EE1) -- (BB);
    \draw[dashed, very thin] (BB) -- (EE2) -- (CC);
    \draw[dashed, very thin] (CC) -- (EE3) -- (DD);
    \draw[dashed, very thin] (DD) -- (EE4) -- (AA);
    
    \draw[dashed, very thin] (EE1) -- (EE5) -- (BB);
    \draw[dashed, very thin] (EE2) -- (EE5);
    \draw[dashed, very thin] (EE2) -- (EE6) -- (CC);
    \draw[dashed, very thin] (EE3) -- (EE6);
    \draw[dashed, very thin] (EE3) -- (EE7) -- (DD);
    \draw[dashed, very thin] (EE4) -- (EE7);
    \draw[dashed, very thin] (EE4) -- (EE8) -- (AA);
    \draw[dashed, very thin] (EE1) -- (EE8);
    
    \draw[fill=fronttriangle] (AA) -- (BB) -- (CC);
    \draw[fill=fronttriangle] (AA) -- (DD) -- (CC);
    \draw (AA) -- (BB) -- (CC);
    \draw (AA) -- (DD) -- (CC);
    
    \node (ee) at ($(AA)!0.85!(CC)$) {};
    \node (ee1) at ($(AA)!0.75!(CC)$) {};
    \node[above] (ee2) at ($(AA)!0.3!(CC)$) {};
    \draw[-latexnew, arrowhead=0.2cm] (AA) -- (ee);
    \draw (ee1) -- (CC);
    
    \coordinate (tiBaryRef) at (barycentric cs:A=.333,B=.333,C=.333);
    \coordinate (tjBaryRef) at (barycentric cs:A=.333,D=.333,C=.333);
    \coordinate (tiBary) at (barycentric cs:AA=.333,BB=.333,CC=.333);
    \coordinate (tjBary) at (barycentric cs:AA=.333,DD=.333,CC=.333);
    \node at (tiBaryRef) {$\reference{F}_i$};
    \node at (tjBaryRef) {$\reference{F}_j$};
    \node at (tiBary) {$F_i$};
    \node at (tjBary) {$F_j$};

    \node(halfA) at ($(A)!0.5!(B)$) {};
    \node(halfB) at ($(A)!0.5!(D)$) {};
    \node(halfAA) at ($(CC)!0.5!(BB)$) {};
    \node(halfBB) at ($(CC)!0.5!(DD)$) {};
    
    \node(halfC) at ($(C)!0.7!(B)$) {};
    \node(halfCC) at ($(BB)!0.4!(AA)$) {};
    
    \node(halfD) at ($(D)!0.3!(C)$) {};
    \node(halfDD) at ($(AA)!0.6!(DD)$) {};
    \node(tiBaryRefFrame) at ([xshift=.5cm]tiBaryRef) {};
    \node(tjBaryRefFrame) at ([xshift=.5cm, yshift=-.1cm]tjBaryRef) {};
    
    \node(tiBaryFrame) at ([xshift=.5cm]tiBary) {};
    \node(tjBaryFrame) at ([xshift=.5cm, yshift=-.1cm]tjBary) {};
    
    \def\angle{30}    
    \draw[ForestGreen,-latex, thick] (tiBaryRefFrame.center) -- ++(\angle:1);
    \draw[Magenta,-latex, thick] (tiBaryRefFrame.center) -- ++(\angle+90:1);
    \draw[blue,-latex, thick] (tiBaryRefFrame.center) -- ++(\angle+62:1);
    
    \def\angle{-15}    
    \draw[ForestGreen,-latex, thick] (tjBaryRefFrame.center) -- ++(\angle:1);
    \draw[Magenta,-latex, thick] (tjBaryRefFrame.center) -- ++(\angle+90:1);
    \draw[blue,-latex, thick] (tjBaryRefFrame.center) -- ++(\angle+70:1);
    
    \def\angle{5}    
    \draw[ForestGreen,-latex, thick] (tiBaryFrame.center) -- ++(\angle:1);
    \draw[Magenta,-latex, thick] (tiBaryFrame.center) -- ++(\angle+90:1);
    \draw[blue,-latex, thick] (tiBaryFrame.center) -- ++(\angle+45:1);
    
    \def\angle{-15}    
    \draw[ForestGreen,-latex, thick] (tjBaryFrame.center) -- ++(\angle:1);
    \draw[Magenta,-latex, thick] (tjBaryFrame.center) -- ++(\angle+90:1);
    \draw[blue,-latex, thick] (tjBaryFrame.center) -- ++(\angle+35:1);

    \draw[thick, -latex]([xshift=-.2cm]halfA.west) to [out=190,in=160] node(helperh)[left]{} ([xshift=-.5cm, yshift=.12cm]halfB.west);
    \draw[thick, -latex]([xshift=.3cm]halfAA.east) to [out=350,in=20] node[right]{$C_{ij}$} ([xshift=.3cm, yshift=-.5cm]halfBB.west);
    
    \draw[thick, -latex]([xshift=.6cm]halfC.east) to [out=50,in=140] node[above]{$R_{i} \text{, s.t. } D_i=R_i\cdot U_i$} ([xshift=-.4cm]halfCC.west);
    \draw[thick, -latex]([xshift=.4cm]halfD.east) to [out=330,in=220] node(helper)[below]{} ([xshift=-.4cm]halfDD.west);
    
    \node(helperh2) at ([xshift=-.4cm, yshift=.15cm]helperh) {$\reference{C}_{ij}$};
    
    \node(helper2) at ([xshift=-.05cm, yshift=-.65cm]helper) {$R_{j}=R_i\reference{F}_i C_{ij} \reference{F}_j^{T} =: R_{i \rightarrow j}$};

\node(nameReference) at (-1, 3.5) {$\reference{S}$};
\node(nameInstance) at (14.7, 3.5) {${S}$};

\end{tikzpicture}
\caption{\revised{Relations between reference shape $\reference{S}$ (left) and shape $S=\phi(\reference{S})$, a deformation thereof (right), s.t.\ $D_i := \nabla\phi|_{\reference{T}_i}$. Note that each frame $F_i=R_i\reference{F}_i$ is defined solely on the respective triangle $T_i$ and all neighboring frames are connected across the shared edge of their underlying triangles via $F_iC_{ij}=F_j$.}} \label{fig:discretization}
\end{figure*}
We consider shapes that belong to a particular population of anatomical structures, s.t.\ each digital shape $S$ can be described as a left-acting deformation $\phi$ of a common reference shape $\reference{S}$ given as triangulated surface. %
Let deformation $\phi$ be affine on each triangle $\reference{T}_i$ of $\reference{S}$, then the deformation gradient $\nabla\phi$ 
is the $3 \times 3$ matrix of partial derivatives of $\phi$ and constant on each triangle $D_i := \nabla\phi|_{\reference{T}_i}$ (see e.g.~\citet{botsch2006deformation} for detailed expressions). \revisedII{Note that transition from deformation to deformation gradient provides invariance under translations.}
Assuming $\phi$ to be an orientation-preserving embedding of $\reference{S}$, we can decompose $D_i$ uniquely into its rotational $R_i$ and stretching $U_i$ components by means of the polar decomposition $D_i = R_i \matop U_i$.
Note that $U_i$ furnishes a complete description of the metric distortion of $\reference{T}_i$ and is defined in reference coordinates, hence invariant under  rotation of $S$.
Indeed, we can obtain a representation of the first fundamental form by restricting the stretches to the tangent plane.
To this end, we define an arbitrary but fixed element-wise field $\{\reference{F}_i\}$ of orthonormal frames
on $\reference{S}$, s.t.\ the last column of each frame is the normal of the respective element.
Then, we represent the metric in terms of reduced stretch $\tilde{U}_i := [\reference{F}^T_i \matop U_i \matop \reference{F}_i]_{3,3} = \mathrm{I}|_{\reference{T}_i}^{\nicefrac{1}{2}}$, where $[\,\cdot\,]_{3,3}$ denotes the submatrix with the third row and column removed.

As for the second fundamental form, we note that at a point $p \in S$ it is determined by the differential of the normal field $N$, viz.\ $\mathrm{I\!I}_p(v,w)= \mathrm{I}_p(-dN_p(v), w)$ for tangent vectors $v,w$.
For a triangulated surface, the differential $dN$ is supported along the edges.
In order to derive a representation thereof, we induce the frame field $\{F_i\}$ on $S$ consistent to $\{\reference{F}_i\}$ using the rotational part of the deformation gradient, i.e.\ $F_i = R_i \matop \reference{F}_i$.
This allows us to define \emph{transition rotations} $F_i\matop C_{ij}=F_j$ for each inner edge (incident to triangles $T_i,T_j$)
that fully describe the change in normal directions.
Note that, while both the frames $\{F_i\}$ and the rotations $\{R_i\}$ are equivariant, the transition rotations $\{C_{ij}\}$ are invariant under global rotations of $S$ and $\reference{S}$.
\revised{Further details hereon are depicted in Fig.~\ref{fig:discretization}.}  
\subsection{Group Structure}\label{sec:groupStructure}
In order to perform intrinsic statistical analysis, we derive a distance that is compatible with the underlying representation space.
In particular, we endow the space with a Lie group structure together with a bi-invariant Riemannian metric for which group and Riemannian notions of exponential and logarithm coincide.
This allows us to exploit closed-form expressions to perform geodesic calculus yielding simple, efficient, and numerically robust algorithms. We recommend chapter two of~\citet{alexandrino2015lie} to readers interested in deeper insight into bi-invariant metrics on Lie groups. \revised{Especially regarding} their existence and the geometric consequences thereof. 

Our shape representation consists of transition rotations $C_{ij}\in \SO$ (one per inner edge) and tangential stretches $\tilde{U}_i\in\Sym[2]$ (one per triangle), where $\SO$ is the Lie group of rotations in $\bbbbr^3$ and $\Sym[2]$ the space of symmetric and positive-definite $2\times2$ matrices.
Following the approach in~\citet{arsigny2006logEuclidean}, we equip $U,V\in\Sym[2]$ with a multiplication $U \circ V := \exp(\log(U)+\log(V))$, s.t.\ $\Sym[2]$ turns into a commutative Lie group. It now allows for a bi-invariant metric induced by the Frobenius inner product yielding distance 
$d_{\Sym[2]}(U,V) = \left\lVert\log(V)-\log(U)\right\rVert_F$. \revisedII{Note that this structure and metric do not exhibit the {\it{}swelling effect} of determinants in interpolation~\cite{goh2011nonparametric}.}
$\SO$ \revised{as a compact Lie group} also admits a bi-invariant metric induced by the Frobenius inner product with distance $d_{\SO}(Q,R) = \left\lVert\log(Q^{T}R)\right\rVert_F$,
s.t.\ we define our representation space as the product group $G := \SO^{n} \times \Sym[2]^{m}$ and $m,n$ the number of triangles and inner edges.
Finally, we define the distance of \revised{two shapes $S,T$ based on the respective group representation $s=s(S),t=t(T) \in G$} as
\begin{align}\label{eq:distance}
\nonumber
 d_{\omega}^{\revisedII{2}}(s, t) &= \dfrac{\omega^3}{\reference{A}_\mathcal{E}}\sum_{(i,j) \in \mathcal{E}} \reference{A}_{ij} \, d_{\SO}^{\revisedII{2}}\left(C^s_{ij}, C^t_{ij}\right) \,\\&+\, \dfrac{\omega}{\reference{A}}\sum_{i=1}^m \reference{A}_i \, d_{\Sym[2]}^{\revisedII{2}}\left(\tilde{U}^s_i, \tilde{U}^t_i\right),
\end{align}
where $\omega\in\bbbbr^{+}$ is a weighting factor, $\mathcal{E}$ is the set of inner edges, $\reference{A}_i$ is the area of triangle $\reference{T}_i$, $\reference{A}_{ij} = \nicefrac{1}{3}(\reference{A}_i+\reference{A}_j)$, $\reference{A}_\mathcal{E}=\sum_{(i,j) \in \mathcal{E}} \reference{A}_{ij}$, and $\reference{A} = \sum_{i=1}^m \reference{A}_i$.
\revised{The area terms hereby} provide invariance under refinement of the mesh as well as simultaneous scaling of $\reference{S}, S, T$, whereas $\omega$ allows for commensuration of the curvature and metric contributions \revised{in analogy to} the Koiter thin shell model (e.g.~\citet{Ciarlet2005}~Sec.~4.1).

\section{Shape Analysis and Processing}

\subsection{Statistical Shape Modeling}\label{sec:modelConstruction}The derived representation carries a rich non-Euclidean structure calling for manifold-valued generalizations for first and second moment statistical analysis. 
\revised{
By virtue of the bi-invariant metric, the proposed representation allows for consistent analysis within the Riemannian framework for which statistics are well-developed, while at the same time providing closed-form, group theoretic expressions for geodesic calculus~\cite{vonTycowicz2018efficient}.
In particular, we employ the \textit{Riemannian center of mass} that provides a rigorous notion of a mean $\mu$ of elements $\{s_i=s_i(S_i)\}$ and can be efficiently computed using the Gauss-Newton descent algorithm~\cite{pennec2006intrinsic, arsigny2006logEuclidean}:
\begin{align*}
    \mu^{k+1}&=\exp\left(\sum_{i} \log{\left(s_i\cdot (\mu^{k})^{-1}\right)}\right)\cdot \mu^{k}.
\end{align*}
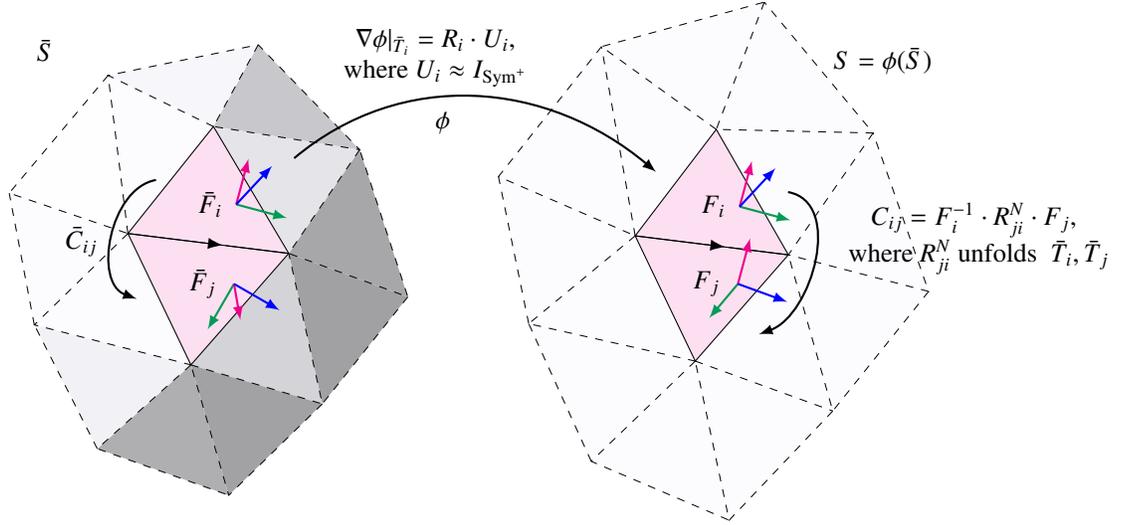
\begin{figure*}[t]
\centering
\begin{tikzpicture}[scale=3.5]

    \coordinate (E1)  at (-0.633747, 0.350197  );
    \coordinate (E2)  at (-0.958541, -0.0564762);
    \coordinate (E3)  at (-1.31385 , -0.403747 );
    \coordinate (E4)  at (-0.81663 , 0.909726  );
    \coordinate (E5)  at (-1.08605 , 0.515523  );
    \coordinate (E6)  at (-1.45757 , 0.0918871 );
    \coordinate (E7)  at (-1.81323 , -0.226681 );
    \coordinate (E8)  at (-1.20192 , 1.30849   );
    \coordinate (E9)  at (-1.37214 , 0.997749  );
    \coordinate (E10) at (-1.69693 , 0.591076  );
    \coordinate (E11) at (-2.05224 , 0.243806  );
    \coordinate (E12) at (-1.77204 , 1.19207   );
    \coordinate (E13) at (-2.14758 , 0.744714  );

    \fill[backtriangleG] (E4)  -- (E5)  -- (E1);
    \fill[backtriangleI] (E2)  -- (E1)  -- (E5);
    \fill[backtriangleE] (E5)  -- (E6)  -- (E2);
    \fill[backtriangleH] (E3)  -- (E2)  -- (E6);
    \fill[backtriangleF] (E6)  -- (E7)  -- (E3);
    \fill[backtriangleD] (E8)  -- (E9)  -- (E4);
    \fill[backtriangleC] (E5)  -- (E4)  -- (E9);
    \fill[fill=fronttriangle] (E9)  -- (E10) -- (E5);  %
    \fill[fill=fronttriangle] (E6)  -- (E5)  -- (E10); %
    \fill[backtriangle] (E10) -- (E11) -- (E6);
    \fill[backtriangleA] (E7)  -- (E6)  -- (E11);
    \fill[backtriangleB] (E9)  -- (E8)  -- (E12);
    \fill[backtriangle] (E10) -- (E9)  -- (E12);
    \fill[backtriangleJ] (E12) -- (E13) -- (E10);
    \fill[backtriangle] (E11) -- (E10) -- (E13);
    
    \draw[dashed, very thin]  (E4)  -- (E5)  -- (E1) -- cycle;
    \draw[dashed, very thin]  (E1)  -- (E2)  -- (E5); 
    \draw[dashed, very thin]  (E6)  -- (E2); 
    \draw[dashed, very thin]  (E2)  -- (E3)  -- (E6); 
    \draw[dashed, very thin]  (E6)  -- (E7)  -- (E3); 
    \draw[dashed, very thin]  (E8)  -- (E9)  -- (E4) -- cycle; 
    \draw (E10)  -- (E9) -- (E5);            %
    \draw (E6)  -- (E5)  -- (E10) -- cycle;  %
    \draw[dashed, very thin]  (E10) -- (E11) -- (E6);
    \draw[dashed, very thin]  (E7)  -- (E11);
    \draw[dashed, very thin]  (E9)  -- (E12)  -- (E8);
    \draw[dashed, very thin]  (E10) -- (E12);
    \draw[dashed, very thin]  (E12) -- (E13) -- (E10);
    \draw[dashed, very thin]  (E11) -- (E13);

    \coordinate (EE1)  at (1.344     ,  0.370973 );
    \coordinate (EE2)  at (0.980065  , -0.0834068);
    \coordinate (EE3)  at (0.586427  , -0.498644 );
    \coordinate (EE4)  at (1.13314   ,  0.97305  );
    \coordinate (EE5)  at (0.811349  ,  0.510023 );
    \coordinate (EE6)  at (0.45777   ,  0.106216 );
    \coordinate (EE7)  at (0.0627517 , -0.274782 );
    \coordinate (EE8)  at (0.719889  ,  1.47034  );
    \coordinate (EE9)  at (0.537551  ,  0.986784 );
    \coordinate (EE10) at (0.22963   ,  0.582247 );
    \coordinate (EE11) at (-0.174685 ,  0.248265 );
    \coordinate (EE12) at (0.101714  ,  1.30145  );
    \coordinate (EE13) at (-0.2896   ,  0.807487 );

    \fill[backtriangle] (EE4)  -- (EE5)  -- (EE1);
    \fill[backtriangle] (EE2)  -- (EE1)  -- (EE5);
    \fill[backtriangle] (EE5)  -- (EE6)  -- (EE2);
    \fill[backtriangle] (EE3)  -- (EE2)  -- (EE6);
    \fill[backtriangle] (EE6)  -- (EE7)  -- (EE3);
    \fill[backtriangle] (EE8)  -- (EE9)  -- (EE4);
    \fill[backtriangle] (EE5)  -- (EE4)  -- (EE9);
    \fill[fill=fronttriangle] (EE9)  -- (EE10) -- (EE5);  %
    \fill[fill=fronttriangle] (EE6)  -- (EE5)  -- (EE10); %
    \fill[backtriangle] (EE10) -- (EE11) -- (EE6);
    \fill[backtriangle] (EE7)  -- (EE6)  -- (EE11);
    \fill[backtriangle] (EE9)  -- (EE8)  -- (EE12);
    \fill[backtriangle] (EE10) -- (EE9)  -- (EE12);
    \fill[backtriangle] (EE12) -- (EE13) -- (EE10);
    \fill[backtriangle] (EE11) -- (EE10) -- (EE13);

    \draw[dashed, very thin]  (EE4)  -- (EE5)  -- (EE1) -- cycle;
    \draw[dashed, very thin]  (EE1)  -- (EE2)  -- (EE5); 
    \draw[dashed, very thin]  (EE6)  -- (EE2); 
    \draw[dashed, very thin]  (EE2)  -- (EE3)  -- (EE6); 
    \draw[dashed, very thin]  (EE6)  -- (EE7)  -- (EE3); 
    \draw[dashed, very thin]  (EE8)  -- (EE9)  -- (EE4) -- cycle; 
    \draw (EE10)  -- (EE9) -- (EE5);            %
    \draw (EE6)  -- (EE5)  -- (EE10) -- cycle;  %
    \draw[dashed, very thin]  (EE10) -- (EE11) -- (EE6);
    \draw[dashed, very thin]  (EE7)  -- (EE11);
    \draw[dashed, very thin]  (EE9)  -- (EE12)  -- (EE8);
    \draw[dashed, very thin]  (EE10) -- (EE12);
    \draw[dashed, very thin]  (EE12) -- (EE13) -- (EE10);
    \draw[dashed, very thin]  (EE11) -- (EE13);

    \coordinate (tiBaryRef) at (barycentric cs:E9=.333,E10=.333,E5=.333);
    \coordinate (tjBaryRef) at (barycentric cs:E6=.333,E5=.333,E10=.333);
    \coordinate (tiBary) at (barycentric cs:EE9=.333,EE10=.333,EE5=.333);
    \coordinate (tjBary) at (barycentric cs:EE6=.333,EE5=.333,EE10=.333);
    \node at (tiBaryRef) {$\reference{F}_i$};
    \node at (tjBaryRef) {$\reference{F}_j$};
    \node at (tiBary) {$F_i$};
    \node at (tjBary) {$F_j$};
    
     \node (e) at ($(E10)!0.65!(E5)$) {};
    \node (e1) at ($(E10)!0.55!(E5)$) {};
    \draw[-latexnew, arrowhead=0.2cm] (E10) -- (e);
    \draw (e1) -- (E5);
    
         \node (ee) at ($(EE10)!0.65!(EE5)$) {};
    \node (ee1) at ($(EE10)!0.55!(EE5)$) {};
    \draw[-latexnew, arrowhead=0.2cm] (EE10) -- (ee);
    \draw (ee1) -- (EE5);

    \node(tiBaryRefFrame) at ([xshift=.1cm]tiBaryRef) {};
    \node(tjBaryRefFrame) at ([xshift=.12cm, yshift=.0cm]tjBaryRef) {};
    
    \node(tiBaryFrame) at ([xshift=.1cm]tiBary) {};
    \node(tjBaryFrame) at ([xshift=.12cm, yshift=.0cm]tjBary) {};
    
    \def\lenlen{.2}
    \def\lenlenlen{.14}
    \def\lenlenlenlen{.18}
    \def\lenlenlenlenlen{.17}
    \def\angle{-15}    
    \draw[ForestGreen,-latex, thick] (tiBaryRefFrame.center) -- ++(\angle:\lenlen);
    \draw[Magenta,-latex, thick] (tiBaryRefFrame.center) -- ++(\angle+90:\lenlenlenlen);
    \draw[blue,-latex, thick] (tiBaryRefFrame.center) -- ++(\angle+62:\lenlen);
    
    \def\angle{190}    
    \draw[ForestGreen,-latex, thick] (tjBaryRefFrame.center) -- ++(\angle-180-130:\lenlen);
    \draw[Magenta,-latex, thick] (tjBaryRefFrame.center) -- ++(\angle+90:\lenlenlen);
    \draw[blue,-latex, thick] (tjBaryRefFrame.center) -- ++(\angle+70-180-110:\lenlen);
    
    \def\angle{-15}    
    \draw[ForestGreen,-latex, thick] (tiBaryFrame.center) -- ++(\angle:\lenlen);
    \draw[Magenta,-latex, thick] (tiBaryFrame.center) -- ++(\angle+90:\lenlenlenlen);
    \draw[blue,-latex, thick] (tiBaryFrame.center) -- ++(\angle+62:\lenlen);
    
    \def\angle{190}    
    \draw[ForestGreen,-latex, thick] (tjBaryFrame.center) -- ++(\angle-180-140:\lenlenlenlenlen);
    \draw[Magenta,-latex, thick] (tjBaryFrame.center) -- ++(\angle-115:\lenlenlenlen);
    \draw[blue,-latex, thick] (tjBaryFrame.center) -- ++(\angle+70-180-100:\lenlen);
    
    \node(halfA) at ($(E10)!0.5!(E9)$) {};
    \node(halfB) at ($(E10)!0.5!(E6)$) {};
    \node(halfAA) at ($(EE5)!0.5!(EE9)$) {};
    \node(halfBB) at ($(EE5)!0.5!(EE6)$) {};
     \node(halfC) at ($(E5)!0.75!(E9)$) {};
     \node(halfCC) at ($(EE10)!0.6!(EE9)$) {};
    \draw[thick, -latex]([xshift=-.02cm]halfA.west) to [out=190,in=160] node(helperh)[left]{} ([xshift=-.05cm, yshift=.0012cm]halfB.west);
    \draw[thick, -latex]([xshift=.1cm]halfAA.east) to [out=350,in=20] node(helperhh)[right]{} ([xshift=.1cm, yshift=-.1cm]halfBB.west);
     \draw[thick, -latex]([xshift=.2cm]halfC.east) to [out=40,in=140] node[above,align=center]{\hspace{-3em}$\nabla\phi|_{\reference{T}_i}=R_i\cdot U_i\text{,}$\\\hspace{-3em}$\text{where }U_i\approx{}I_{\text{Sym}^+}$} ([xshift=-.07cm]halfCC.west);

     \path([xshift=.2cm]halfC.east) to [out=40,in=140] node[below]{\makecell[c]{\hspace{-1.5em}$\phi$}} ([xshift=-.07cm]halfCC.west);
     
     \node(helperh2) at ([xshift=-.08cm, yshift=-.02cm]helperh) {$\reference{C}_{ij}$};
     \node(helperh22)[right,align=center] at ([xshift=+.02cm, yshift=+.08cm]helperhh) {$C_{ij}=F^{-1}_{i}\cdot R_{ji}^{{\color{black} N}}\cdot F_j\text{,}$\\\hspace{0.2em}$\text{\,where }R_{ji}^{{\color{black} N}}$\text{ unfolds } $\reference{T}_i,\reference{T}_j$};

\node[above left](nameReference) at ([yshift=+.22cm]$(E13)!0.55!(E12)$) {$\reference{S}$};
\node[right](nameInstance) at ([yshift=+.05cm]$(EE8)!0.55!(EE4)$) {${S=\phi(\reference{S})}$};

\node(helphelp) at ([xshift=-0.8cm]nameReference) {};

\end{tikzpicture}
\caption{\revised{Non-flat surface $\reference{S}$ is employed as reference within the deformation setup. Flat surface $S$ is determined via deformation of $\reference{S}$ by $\phi$, s.t.\ metric distortion, i.e.\ $U_i$, is close to identity and $R_i$ is determined by means of $C_{ij}$ that are normal vector fixing modifications of $\reference{C}_{ij}$}.} \label{fig:flatteningscheme}
\end{figure*}
As our representation space comprises a symmetric positive-definite and a rotational part the algorithm's respective behavior can be assessed separately.
Since $\Sym[2]$ is \revisedII{a}belian and flat (indeed a vector space) the algorithm converges after exactly one step~\cite{pennec2006intrinsic}.
In contrast, $\SO$ features a less trivial structure exhibiting, e.g.\ a non-empty \emph{cut locus}.
However, as long as the data is located within some $\varepsilon$-ball, with $\varepsilon$ smaller than the injectivity radius of the exponential map,
the existence and (local) uniqueness of the mean can be guaranteed~\cite{pennec2020advances} and thus convergence of the algorithm.
Note that this assumption is only violated for transition rotations differing by more than $\pm\pi$, what can be practically ruled out. (cf.~Appx.~\ref{app:TransRotDist}).
As framework for second order statistics we employ (\revisedII{linearized}) Principal Geodesic Analysis~\cite{Fletcher2004PGA} at $\mu$ that is an extension of the common Principal Component Analysis to Riemannian manifolds allowing for covariance analysis.
In particular, \revisedII{we solve}
\begin{align*}
\vartheta_{p}&= \argmax_{\substack{\vartheta\in T_\mu{}G}}{\sum_{i}{ g^{\mu}_{\omega}{\left(\vartheta, \log_\mu{(s_i)}\right)}^2}},\\
&~\text{s.t. $g_\omega^{\mu}(\vartheta_p,\vartheta_l)=\delta_{pl}$, for $1\leq{}l\leq{}p$}
\end{align*}
for the main modes of variation $\vartheta_p$, where $g_{\omega}$ is the metric associated to distance $d_{\omega}$~(Eq.~\ref{eq:distance}). The solution is found algorithmically by eigendecomposition of the Gram matrix $C=(c_{ij})_{ij}$, with $c_{ij}=g^{\mu}_{\omega}{\left(\log_\mu{(s_i)}, \log_\mu{(s_j)}\right)}$ (cf.~\citet{younes2010shapes} E.2.2).
In order to avoid a systematic bias due to the choice of reference shape $\reference{S}$, we require it to agree with the mean of the training data ($\reference{S}=\reference{S}(\mu)$) as proposed in~\citet{joshi2004unbiasedAtlas}. Details on how to determine a shape for given representation parameters are given in Sec.~\ref{sec:numerics}.}

\subsection{\revised{Quasi-Isometric Surface Flattening}}
\label{sec:flattening}
\revised{Apart from shape analysis, the proposed representation provides an effective framework for processing operations. In this section, we derive an approach for the calculation of a quasi-isometric surface chart, i.e.\ a low-distortion immersion of a given surface into the two dimensional Euclidean space. Since flattening techniques provide a way to access problems of three dimensional context in a two dimensional fashion, such an approach facilitates practically relevant applications like visualization and deep learning based assessment of knee cartilage thickness (Fig.~\ref{fig:flatteningexample}). For a broader overview on application examples we refer to \citet{kreiser2018survey}, who published a survey on flattening-based medical visualization techniques.
The key idea behind our flattening approach is to consider the set of flat immersions of the reference shape $\reference{S}$ as a submanifold in shape space.
This submanifold has a particularly convenient characterization in our representation space $G$ allowing for a simple, isometric projection: We fix the metric part $\{U_i= I_{\text{Sym}^+}\}$ as identity (no metric distortion) and choose transition rotations s.t.\ they act as identity on the normals (zero curvature).
In particular, the latter are given by $\{C_{ij}=F^{-1}_{i}\cdot R_{ji}^{\revisedII{N}}\cdot F_j \}$, where $R_{ji}^{\revisedII{N}}$ unfolds triangle $\reference{T}_j$ to the plane of triangle $\reference{T}_i$.
Phrasing it in the group setting this means we project the transition rotations to $\SO[2]$ (embedded in $\SO$) since all feasible flat shape representations necessarily have to be elements of $\SO[2]^{n} \times \Sym[2]^{m}$.
See Fig.~\ref{fig:flatteningscheme} for a schematic overview.
Note that the obtained projection corresponds to a realizable deformation, iff the input shape $\reference{S}$ is isometric to the plane.
In general, a near-isometric flattening can be efficiently computed using our reconstruction (cf. next section).}
\begin{figure*}[t]
    \fboxsep0pt 
    \center       
    \includegraphics[width=1.0\linewidth]{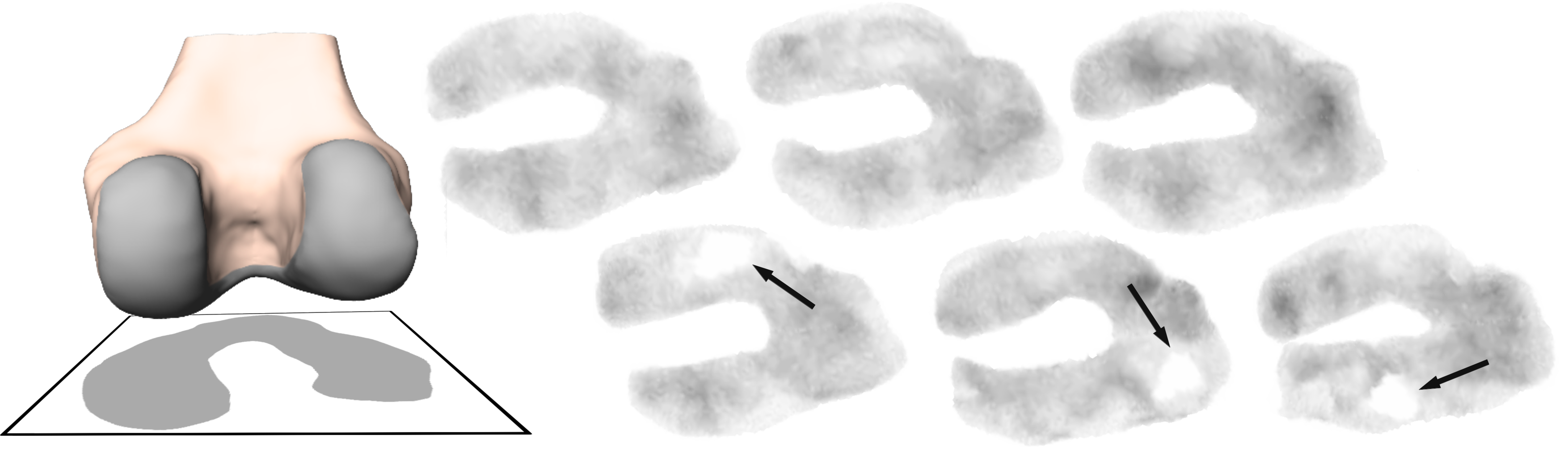}
 \caption{\revised{Left: Example for a flattened femoral articular cartilage region. Right: Flattened femoral cartilage with gray value coded cartilage thickness. The top row shows healthy subjects whereas the subjects in the bottom row exhibit denuded areas within the cartilage region.}}
    \label{fig:flatteningexample} 
\end{figure*}

\section{Efficient Numerics}\label{sec:numerics}
In this section, we propose an efficient numerical algorithm to solve for the inverse problem of mapping a point in representation space $G$ to a corresponding shape $S = \phi(\reference{S})$.
If the corresponding rotations $\{R_i\}$ were known, $\phi$ could be obtained as the minimizer of $\sum_{i=1}^m \reference{A}_i \norm{D_i - R_i \matop U_i}^2_F$ by solving the well-known Poisson equation (see e.g.~\citet{botsch2006deformation,vonTycowicz2018efficient}).
However, in our representation the rotations are only given implicitly in terms of the transition rotations.
In particular, an immediate computation shows that $R_j = R_i \matop \reference{F}_i \matop C_{ij} \matop \reference{F}_j^T =: R_{i \rightarrow j}$ for an integrable field $\{C_{ij}\}$.
Based on this condition, for each triangle $T_i$ we can formulate a residual term $\varepsilon_i(\phi, \{R_i\}) = \sum_{j \in \mathcal{N}_i} \nicefrac{1}{\abs{\mathcal{N}_i}} \norm{D_i - R_{j \rightarrow i} \matop U_i}^2_F$ in terms of the rotations of neighboring triangles (indexed by $\mathcal{N}_i$).
Then, the objective for the inverse problem is given as $E(\phi) = \min_{\{R_i\in\SO\}} E(\phi, \{R_i\})$, where $E(\phi, \{R_i\}) = \sum_{i=1}^m \reference{A}_i \, \varepsilon_i(\phi, \{R_i\})$.
Although $E(\phi)$ is a nonlinear function calling for iterative optimization routines, it exhibits a special structure amenable to an efficient alternating minimization technique.
Specifically, we employ a block coordinate descent strategy that alternates between a local and a global step:

\textit{Local step:} First, we minimize $E(\phi, \{R_i\})$ over the rotations $\{R_i\}$ keeping $\phi$ (hence $D_i$) fixed.
Each summand in $\varepsilon_i$ depends on a single rotation $R_j$, s.t.\ the problem decouples into individual low-dimensional optimizations that can be solved in closed-form and allow for massive parallelization. \revised{Note that the problem reduces to the well-known orthogonal Procrustes problem. Further details are given visually and formally in Appx.~\ref{appx:InitLocalStep}.}

\textit{Global step:} Second, we minimize $E(\phi, \{R_i\})$ over $\phi$ with rotations $\{R_i\}$ fixed leading to a quadratic optimization problem for which the optimality conditions are determined by a Poisson equation.
As the system matrix is sparse and depends only on the reference shape, it can be factorized once during the preprocess allowing for very efficient global solves with close to linear cost.

Note that the objective is bounded from below and that both local and global steps feature unique solutions that are guaranteed to weakly decrease the objective making any numerical safeguards unnecessary.
This contrasts with classical approaches that require precautions, such as line search strategies and modification schemes for singular or indefinite Hessians, to guarantee robustness.

\textbf{Initialization} 
To provide the solver with a warm start, we compute an initial guess for the rotation field $\{R_i\}$.
To this end, we employ the local integrability condition $R_j = R_{i \rightarrow j}$ to propagate an initial rotation matrix from an arbitrary seed along a precomputed spanning tree of the dual graph of $\reference{S}$. Note, that this strategy recovers the rotation field exactly for integrable $\{C_{ij}\}$. In case of non-integrable fields, one advantage of the Poisson-based reconstruction (global step) is that it distributes errors uniformly s.t.\ local inconsistencies are attenuated. \revised{More details on the initialization procedure can be found in Appx.~\ref{appx:InitLocalStep}.}

\section{Experiments and Results}\label{sec:experimentsResults}
\revised{Except where stated otherwise all experiments are performed employing a fixed metric commensuration weight $\omega = 10$ that empirically shows the best performance in our knee osteoarthritis classification experiment (cf. Appx.~\ref{appx:varyCommPara}).}
\subsection{Data}\label{sec:data}
\revised{We employ four different datasets to ensure a qualitative and quantitative as well as a technical and application-oriented assessment of the proposed FCM.\\
{\bf (i) OAI - Right distal femora} (see Fig.~\ref{fig:OAIntro}) from the Osteoarthritis Initiative\footnote{nda.nih.gov/oai} (OAI) database. All subjects are rated w.r.t. knee osteoarthritis using the Kellgren and Lawrence score (0, healthy $\rightarrow$ 4, severely diseased) \cite{kellgren1957KLscore}. The dataset consists of 58 severely diseased (grade 4) and 58 healthy subjects (grade 0,1) that were also used for evaluation in~\cite{vonTycowicz2018efficient} to which we refer for further details on the data, especially with regards to the arrangement of correspondence (8988 vertices, 13776 triangles). We added a list of patient ids to Appx.~\ref{app:DataIds} since the  underlying segmentation masks are publicly available as part of publication~\citet{ambellan2019automated}.\\
{\bf (ii) ADNI - Right hippocampi} (see Fig.~\ref{fig:ADNIIntro}) from the Alzheimer's Disease Neuroimaging Initiative\footnote{adni.loni.usc.edu} (ADNI) consisting of 60 subjects showing Alzheimer's disease and 60 cognitive normal controls. We prepared this dataset using imaging data from the ADNI database that contains, among others, 1632 brain MRI scans collected on four different time points with segmented hippocampi. %
We established surface correspondence (2280 vertices, 4556 triangles) in a fully automatic manner employing the deblurring and denoising of functional maps approach~\cite{ezuz2017deblurring}\footnote{code online available: cs.technion.ac.il/~mirela/code/fmap2p2p.zip} for isosurfaces extracted from the given segmentations.
The dataset was randomly assembled from the baseline shapes for which segmentations were simply connected and remeshed surfaces were well-approximating ($\leq\SI{e-5}{\milli\metre}$ root mean square surface distance to the isosurface). Similar as for the OAI dataset we added a list of scan ids to Appx.~\ref{app:DataIds} since the used hippocampus segmentations are publicly available as part of the ADNI database.\\
{\bf (iii) FAUST - A male human body in two poses} being part of the anthropological, open-access Fine Alignment Using Scan Texture (FAUST)~\citet{Bogo2014FAUST} dataset of whole body scans featuring high-quality, dense correspondences (6890 vertices, 13776 faces).
We chose two poses of the same (male) person, lifting the arms up and down alongside its body (Fig.~\ref{fig:artifactsCurlFAUST}, right).\\
{\bf (iv) PIPE - A pair of synthetic pipe surfaces}, one in a cylindrical and one in a helical configuration consisting of 1220 triangles and 612 vertices as can be seen in Fig.~\ref{fig:artifactsCurlFAUST} (left).\\
 Throughout the manuscript we will refer to the datasets using the above acronyms relating to the data origin or content.
\begin{figure}[tbh]
    \center       
    \includegraphics[width=1\linewidth]{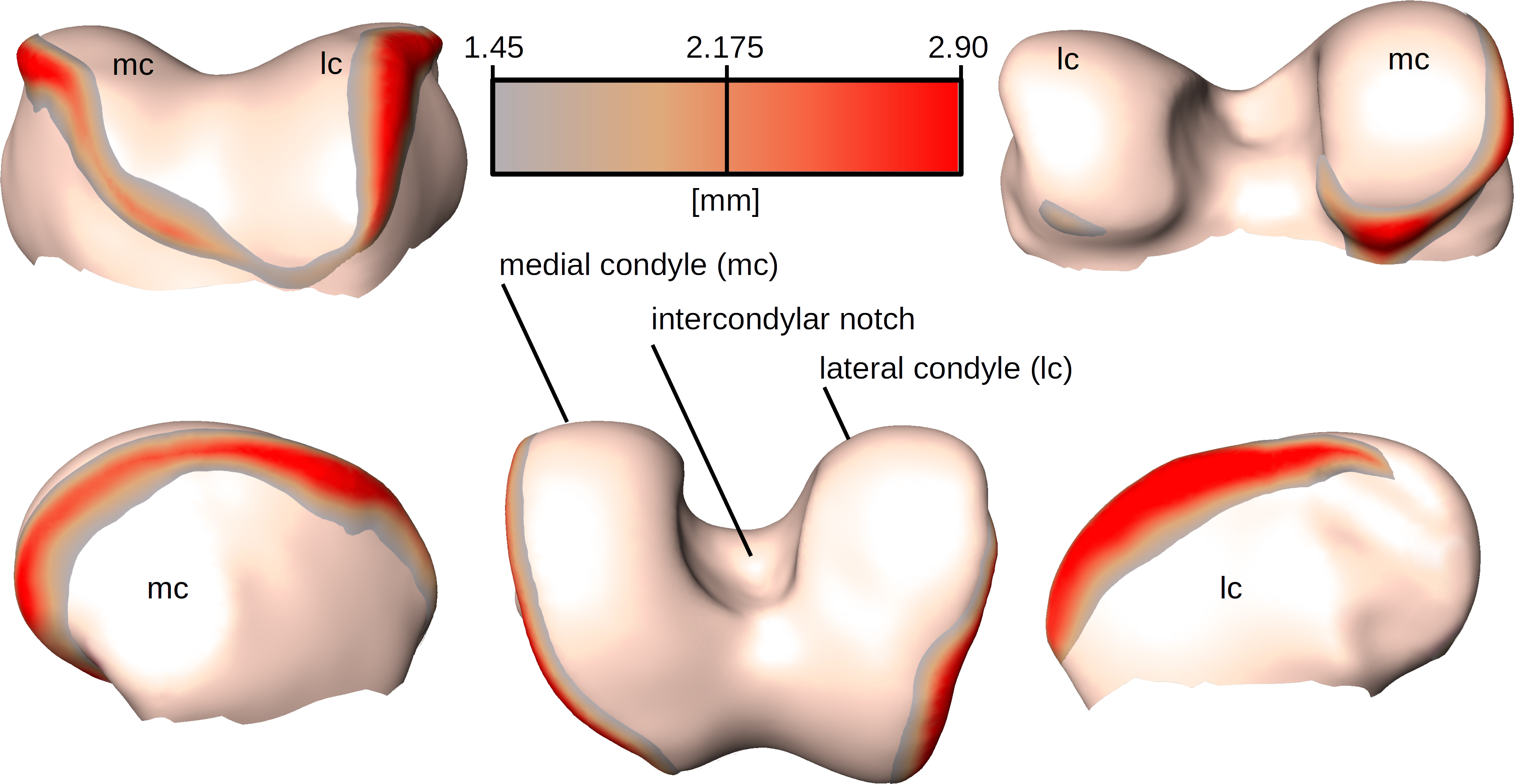}
 \caption{Mean shape of healthy distal femora overlaid with (larger) mean shape of the diseased femora wherever the distance is larger than 1.45mm, colored accordingly.}
    \label{fig:OAIntro} 
\end{figure}
\begin{figure}[tbh]
    \center       
    \includegraphics[width=0.9\linewidth]{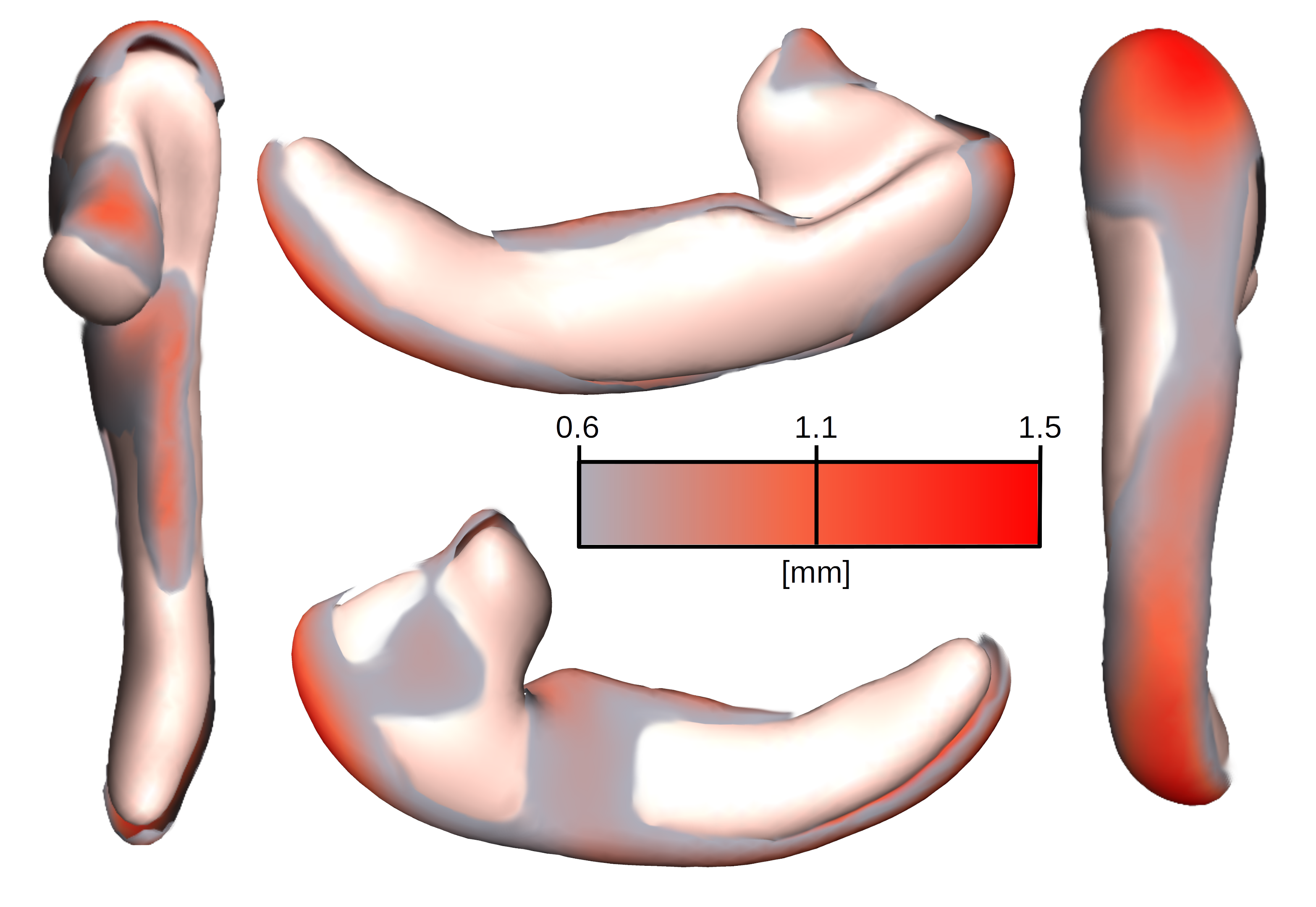}
 \caption{\revised{Mean shape of diseased right hippocampi overlaid with mean shape of the healthy hippocampi wherever the distance is larger than 0.6mm, colored accordingly.}}
    \label{fig:ADNIIntro} 
\end{figure}
\begin{figure}[tbh]
    \center       
    \includegraphics[width=0.99\linewidth]{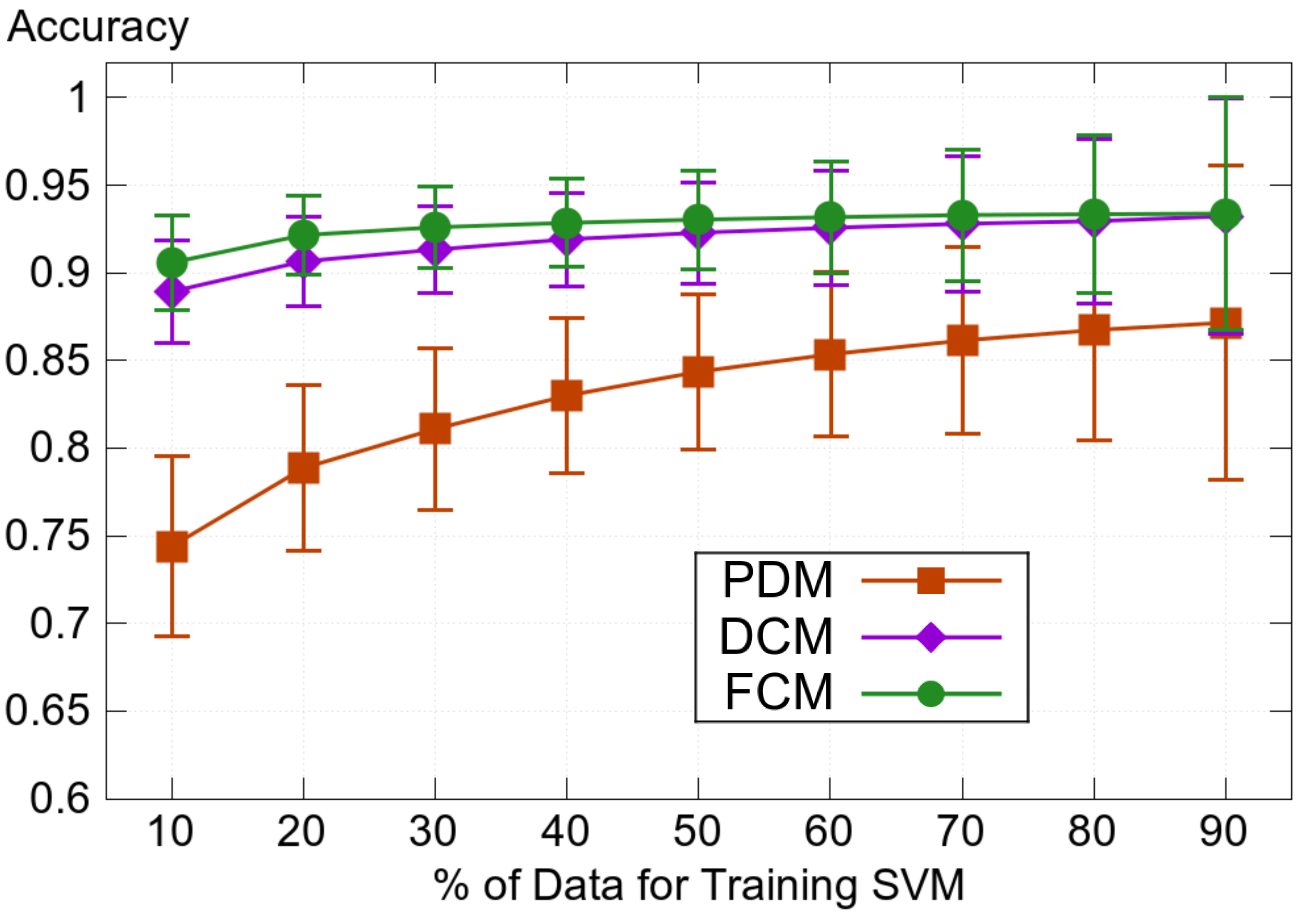}
 \caption{OA classification experiment for the proposed FCM, PDM~\cite{Cootes1995ASM} and DCM~\cite{vonTycowicz2018efficient}.}
    \label{fig:OAClassification} 
\end{figure}
\begin{figure}[tbh]
    \center       
    \includegraphics[width=0.99\linewidth]{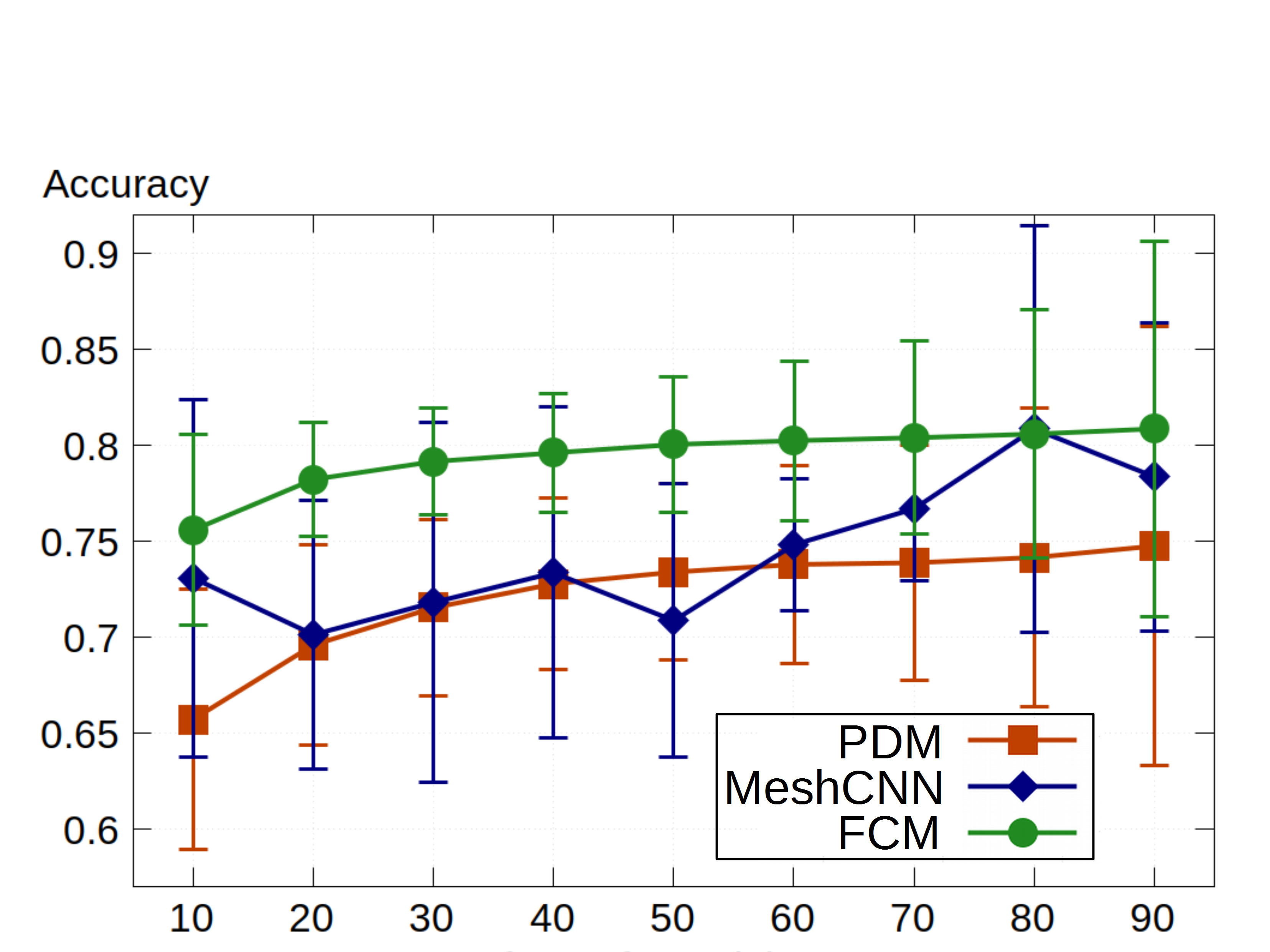}
 \caption{Alzheimer's classification experiment for the proposed FCM\revisedII{, MeshCNN~\cite{hanocka2019meshcnn} and PDM~\cite{Cootes1995ASM}.}}
    \label{fig:ADNIClassification}
\end{figure}
 The datasets OAI and ADNI will be used for quantitative analysis and comparison to other shape models, whereas FAUST and PIPE serve for qualitative assessment of the proposed model.}
\revised{\subsection{Disease Classification}\label{sec:classificationGen}
To assess the sensitivity of the proposed FCM for degenerative shape changes and compare to different other explicit shape modeling approaches, we will perform two binary disease classification experiments, one regarding knee osteoarthritis on the distal femur and one concerning Alzheimer's disease w.r.t.\ the right hippocampus. Although these diseases are very different and not comparable in a direct way we will make use of the same classification pipeline for both of them assessing the generalization potential of the proposed FCM regarding classification tasks. To this end, we train a support vector machine (SVM) with linear kernel on feature vectors comprising shape weights, i.e.\ coefficients of the basis representation in terms of the principal modes for every input shape. By construction this representation is exact for all input shapes and every shape model type. The coefficients hence forthrightly reflect the underlying model approach to gauge variation.
The classifier is trained on a balanced set of feature vectors for different shares of data varying from 10\% to 90\% with testing on the respective complement. To address the randomness in our experimental design, we perform a Monte Carlo cross-validation drawing 10000 times per partitioning.}
\begin{figure*}[t]
\centering
\begin{overpic}[width=0.99\textwidth,percent,grid=false]{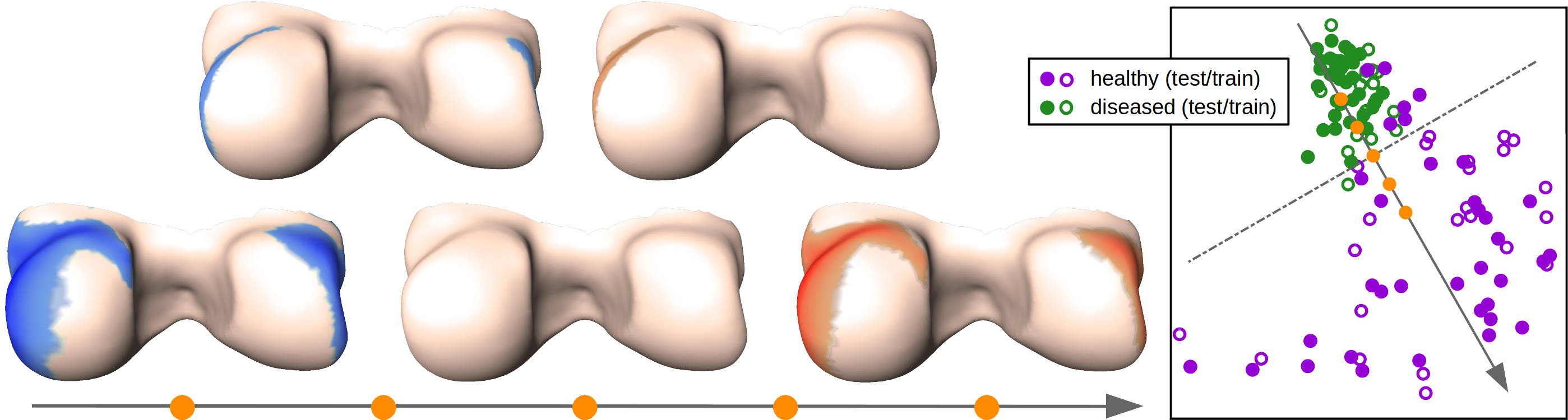}
\put(80,14){$\eta^{\bot}$}
\put(89.5,10){$\eta$}%
\put(43,1.6){$\eta$}%
\end{overpic}
\caption{\revisedII{Visualization of the discriminating direction $\eta$ and separating hyperplane $\eta^\bot$ for OA classification showing a 2-dimensional projection (right) and corresponding shapes (left) equidistantly sampled within the interval containing the input data (note that projections onto $\eta$ and the visualizing plane do not commute causing the interval to appear smaller). Point-wise distance to the middle shape colored using {\it{}-0.5mm}~\ShowColormapOAI{}~{\it{}0.5mm} with neutral window (i.e. rosy color) from {\it{}-0.15mm} to {\it{}0.15mm}.}} \label{fig:svm_oai}  
\end{figure*}

\begin{figure*}[t]
\centering
\centering
\begin{overpic}[width=0.99\textwidth,percent,grid=false]{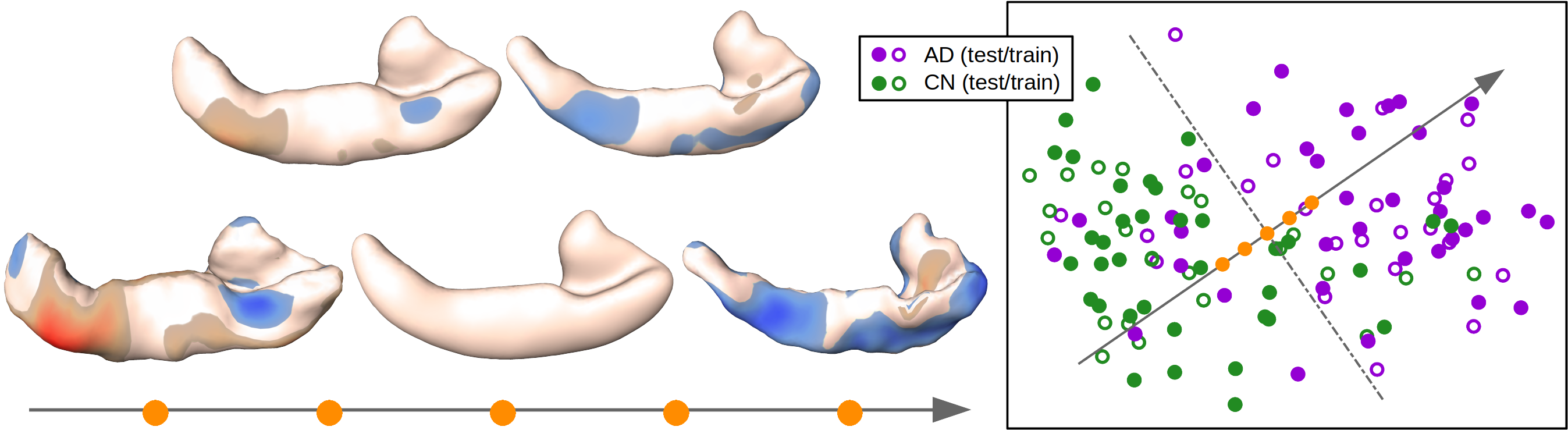}
\put(75.5,21.4){$\eta^{\bot}$}
\put(88,16.15){$\eta$}%
\put(37,2.1){$\eta$}%
\end{overpic}
\caption{\revisedII{Visualization of the discriminating direction $\eta$ and separating hyperplane $\eta^\bot$ for Alzheimer's classification showing a 2-dimensional projection (right) and corresponding shapes (left) equidistantly sampled within the interval containing the input data (note that projections onto $\eta$ and the visualizing plane do not commute causing the interval to appear smaller).  Point-wise distance to the middle shape colored using {\it{}-2.5mm}~\ShowColormap{}~{\it{}2.5mm} with neutral window (i.e. rosy color) from {\it{}-0.5mm} to {\it{}0.5mm}.}} \label{fig:svm_adni}
\end{figure*}
\subsubsection{Knee Osteoarthritis Classification}\label{sec:oaClassification}
Osteoarthritis (OA) is a degenerative disease of the joints that is i.a.\ characterized by changes of the bone shape (see Fig.~\ref{fig:OAIntro}).
Here, we investigate the proposed FCM's ability to classify knee OA for the OAI dataset of distal femora. \revised{Since our test set contains 58 healthy and 58 diseased cases the SVM classifier is trained on 115-dimensional feature vectors.}
We compare to the popular \textit{point distribution model}~\cite{Cootes1995ASM} (PDM) as well as to the differential coordinates model~\cite{vonTycowicz2018efficient} (DCM), which recently achieved highly accurate classification results.
Figure~\ref{fig:OAClassification} shows the results in terms of average accuracy and standard deviation.
Note that solely the FCM achieves an accuracy of over 90\% in case of sparse (10\%) training data.

\begin{figure*}[tb]
 \fboxsep0pt 
  \subfloat{
	\begin{minipage}[c]{0.49\linewidth}
	   \centering
	   \includegraphics[width=1\textwidth]{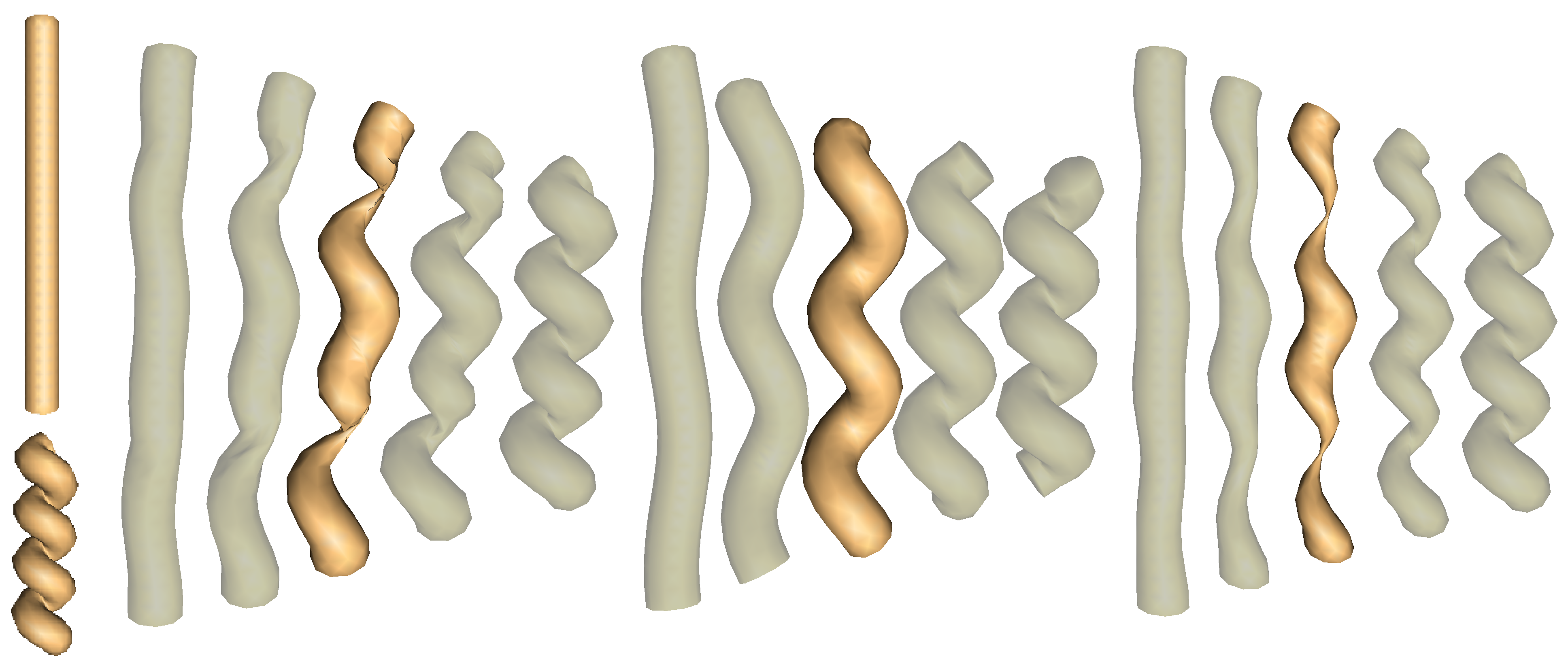}
	\end{minipage}}
 \hfill 	
  \subfloat{
	\begin{minipage}[c]{0.49\linewidth}
	   \centering
	   \includegraphics[width=1\textwidth]{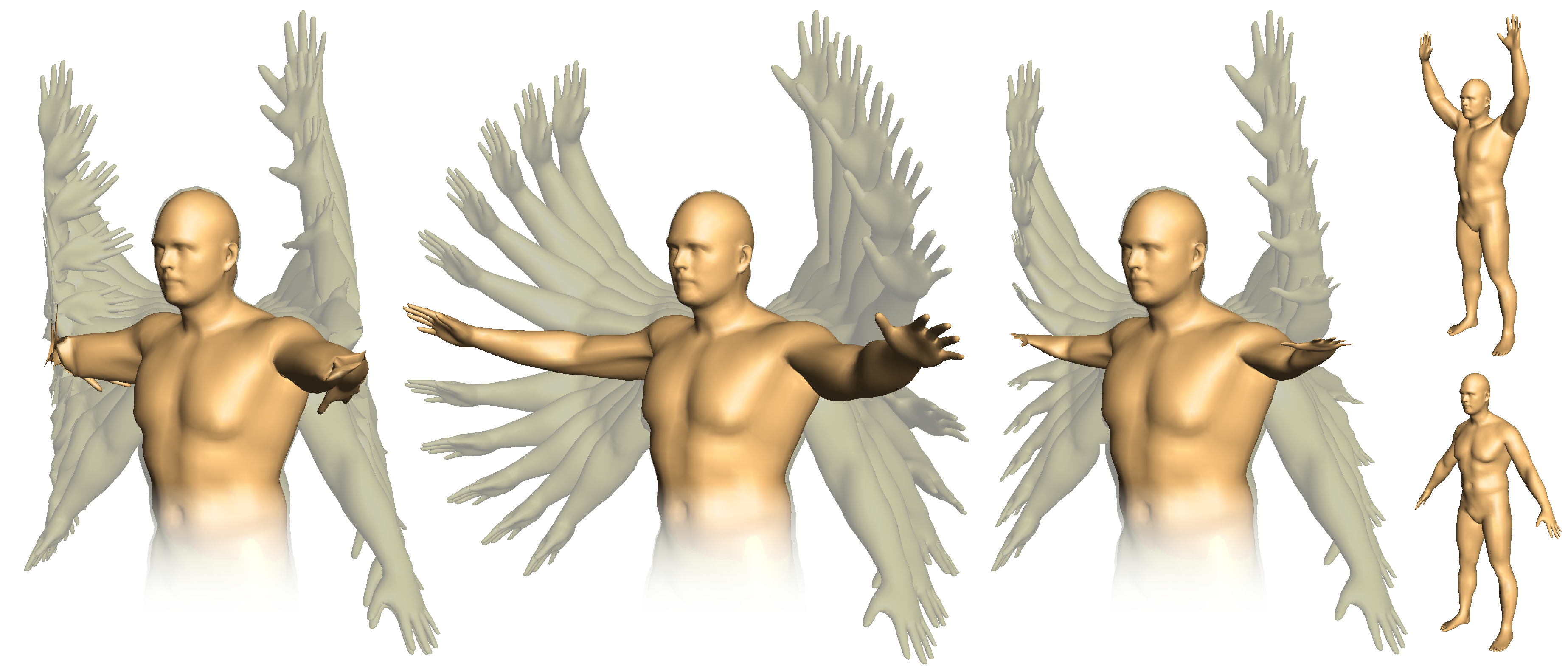}
	\end{minipage}}
\caption{
 Interpolating geodesic (mean highlighted) for the pipe surface (left) and FAUST (right) data within (f.l.t.r.) the DCM~\cite{vonTycowicz2018efficient}, the proposed FCM, and PDM~\cite{Cootes1995ASM}, each.}
 \label{fig:artifactsCurlFAUST} 
\end{figure*}

\revised{\subsubsection{Alzheimer's Classification}\label{sec:AlzClassification}
There is a substantial body of work confirming the well-known connection between hippocampal volume loss and Alzheimer's progression~\cite{kohler1998memory, de2000hemispheric, bonner2015verbal}.
In line with these findings, we observe $\approx\nicefrac{1}{4}$ volume loss between the FCM-based mean shapes of the diseased subjects to the one of the healthy controls.
This motivates a classification experiment regarding the shape of right hippocampi and the disease state to further evaluate the descriptiveness of our shape representation.
For this experiment we employ the commensuration parameter $\omega=0.98$ that empirically performs best w.r.t.\ classification accuracy, i.e.\ metric and curvature related differences are weighted almost equally within the shape analysis.
Since our test set contains 60 cognitive normal and 60 diagnosed Alzheimer's cases the SVM is trained on 119-dimensional feature vectors.

Given the coarse discretization of the hippocampal surface (other than the OAI data) and, thus, moderate hardware requirements, we can perform a direct comparison to MeshCNN~\cite{hanocka2019meshcnn}, i.e.\ a state-of-the-art surface-based classifier from the field of geometric deep learning.
Specifically, we employed the implementation of the authors\footnote{github.com/ranahanocka/MeshCNN} performing training on an Nvidia GTX 980 TI graphics card (6GB memory).
Due to the lack of proper stopping criteria (no option for a validation set), we report the best test accuracy attained in the first 100 epochs, which is rather an upper bound for the classification performance.
Due to the high computational cost ($>2$ hours for training) we restrict to 10 samples per partitioning during Monte Carlo cross-validation, which increases variability as evident by the lack of monotonicity of the estimated dependency of the accuracy on the training size. \revisedIII{Partitioning is carried out analogously to the SVM classifier and training employs the Adam optimizer \cite{kingma2014adam} with weight decay $\beta_1=0.9$ and $\beta_2=0.999$}.

Figure~\ref{fig:ADNIClassification} shows the obtained classification accuracies for MeshCNN as well as our FCM-based and (for reference) PDM-based SVM.
Note that the FCM reaches average accuracies ranging from 75.6\% (10\% training) up to 80.8\% (90\% training) with values above 80\% for all data shares $\geq$50\%.
Remarkably, the FCM-based classifier not only outperforms the PDM one but is also superior to MeshCNN especially in presence of sparse training data. \revisedIII{Note, these results have to be understood in the context of data used, namely the shape of \emph{one} single anatomy. Higher classification accuracy is possible if more data is utilized, as e.g.\ 3D MRI scans of the \emph{whole} brain in \citet{seo2016covariant}.}
}

\revisedII{\subsubsection{Transparency}
The proposed classifier exposes a high degree of interpretability and explainability due to the generative character of statistical shape models and the linearity of the employed SVM.
In particular, the discriminating direction underlying the SVM corresponds to a geodesic in representation space that directly encodes the \textit{single} type of morphological variation that determines the classifier's prediction.
We provide a visualization of the discriminating direction for both experiments (OAI and ADNI) in Fig.~\ref{fig:svm_oai} and~\ref{fig:svm_adni} based on SVM instances with average classification accuracy obtained for 40\%/60\% training/testing split. %
In addition to shapes sampled along the discriminating direction, we provide a 2-dimensional visualization using orthogonal projection onto the plane spanned by the two principal geodesic modes that retain the highest classification accuracy, viz.\ $\vartheta_1,\vartheta_3$ and $\vartheta_1,\vartheta_2$ for ADNI and OAI, respectively.}
\subsection{Validity}\label{sec:validity}
Frequently, datasets feature a high nonlinear variability that are characterized by large rotational components, which are insufficiently captured by linear models like PDM.
While DCM treats the rotational components explicitly, it requires them to be well-localized, s.t.\ the logarithm is unambiguous.
This assumption may not be satisfied for data with large spread in shape space.
Contrary, our model overcomes this limitation by utilizing a relative encoding via transition rotations, which will never exceed $\pm\pi$ in practical scenarios \revised{(cf. Appx.~\ref{app:TransRotDist})}.
In~Fig.~\ref{fig:artifactsCurlFAUST} we illustrate the validity of our model for two extreme examples in comparison to PDM and DCM.
\subsection{Computational Performance}\label{sec:performance}
We compare our framework in terms of computational efficiency to two state-of-the-art approaches:
The \textit{large deformation diffeomorphism metric mapping} (LDDMM) using the open-source Deformetrica~\cite{Durrleman2014deformetrica} software, and the recent DCM.
To this end, we compute the mean shape on 100 randomly sampled pairs from the OAI dataset.
Overall, the LDDMM approach requires 172.8s ($\pm$44.8s) in average whereas the proposed FCM features an average runtime of only 2.3s ($\pm$1.9s), hence a two orders of magnitude speedup.
In comparison to the highly efficient DCM---requiring 1.1s ($\pm$0.3s) in average---our model achieves runtimes within the same order of magnitude, despite the added nonlinearity in the inverse problem.
\subsection{Specificity, Generalization Ability, Compactness}\label{sec:stdMeasures}
We perform a quantitative comparison with PDM and DCM using standard measures (detailed in \citet{Davies2008SMS}) w.r.t.\ a physically-based surface distance $\mathcal{W}$~\cite{heeren2018ShellPGA} as proposed in  \citet{vonTycowicz2018efficient}.
\textit{Specificity} (Fig.~\ref{fig:SpecificityGenAbilityCompactness} middle) evaluates the validity of the model generated instances in terms of their distance to the training shapes. 
We estimate it using 1000 randomly generated instances according to the discrete distribution of the respective model.
\textit{Generalization ability} (Fig.~\ref{fig:SpecificityGenAbilityCompactness} left) assesses how well a model represents unseen instances. It is calculated in a leave-one-out study.
\textit{Compactness} (Fig.~\ref{fig:SpecificityGenAbilityCompactness} right) measures the relative amount of variability of the training set captured by every mode in an accumulated manner.
The results show that the FCM is more specific than PDM and DCM.
In terms of generalization ability, the FCM is superior to PDM, yet inferior to DCM.
Finally, the FCM is less compact than PDM and DCM. Note that compactness is calculated for each model w.r.t its own metric, hence not directly comparable. In particular, we found that decreasing $\omega$ leads to increased compactness, albeit at the \revised{possible} expense of classification accuracy \revised{(cf. Appx.~\ref{appx:varyCommPara}}).
\begin{figure*}[t]
    \fboxsep0pt 
    \center       
    \includegraphics[width=0.99\linewidth]{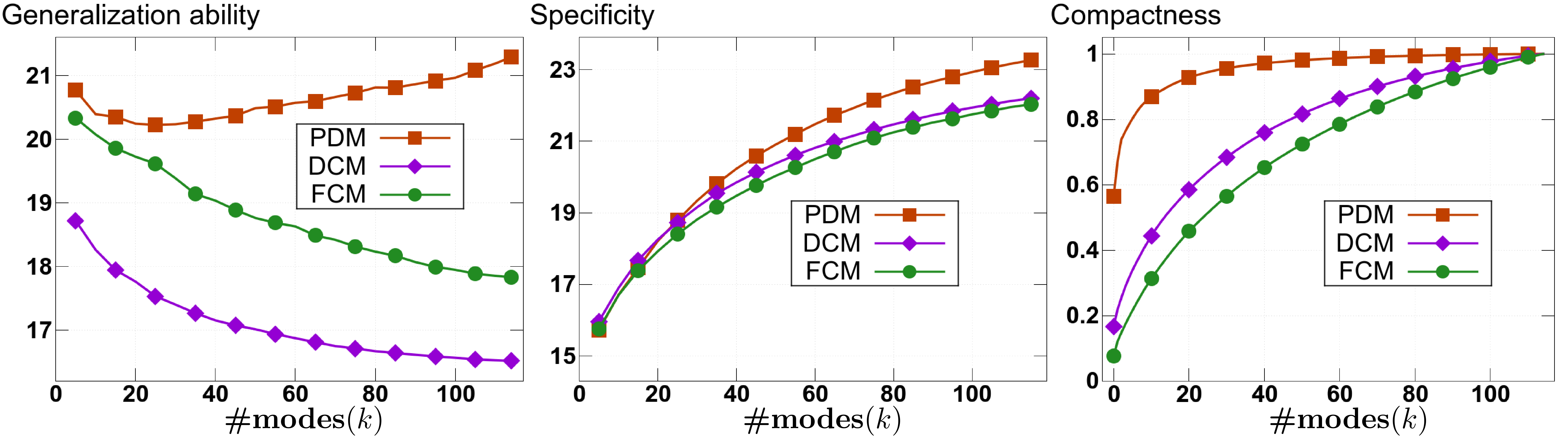}
\caption{Generalization ability \revisedII{(left, lower$\sim$better)}, specificity \revisedII{(middle, lower$\sim$more specific)} and compactness \revisedII{(right, higher$\sim$more compact)} of the proposed FCM, PDM~\cite{Cootes1995ASM}, and DCM~\cite{vonTycowicz2018efficient} on the OAI dataset.}
    \label{fig:SpecificityGenAbilityCompactness} 
\end{figure*}
\section{Conclusion and Future Work}\label{sec:conclusion}
In this work, we presented a novel nonlinear SSM based on a Euclidean motion invariant---hence alignment-free---shape representation with deep foundations in surface theory. The rich structure of the derived shape space assures valid shape instances even in presence of strong nonlinear variability.
\revised{Moreover, we demonstrated that the proposed shape representation can be used to effectively calculate quasi-isometric flat immersions to the plane.}
We perform\revised{ed} manifold-valued statistics in a consistent Lie group setup allowing for closed-form evaluation of Riemannian operations.
Furthermore, we devis\revised{ed} an efficient and robust algorithm to solve the inverse problem that does not require any numerical safeguards.

\revisedII{
We showed that FCM yields highly differentiating shape descriptors that promote state-of-the-art performance for shape-based disease state classification:
(1) In comparisons based on a simple classifier (viz.\ linear SVM) our descriptors are superior to the recent Riemannian DCM~\cite{vonTycowicz2018efficient} as well as the popular linear PDM~\cite{Cootes1995ASM} based descriptors;
(2) Remarkably, the FCM-based SVM significantly outperforms the state-of-the-art, geometric deep learning approach MeshCNN~\citep{hanocka2019meshcnn}.

We would like to remark that our approach guarantees deformations to be only locally diffeomorphic (i.e.\ immersions) but not globally.
However, we did not observe any non-diffeomorphic instances in our experiments (e.g.\ all FCM-derived shapes shown in this article are embeddings).
Indeed, the FCM correctly captures nonlinear deformations with large rotational components that violate well-localizedness assumptions of previous approaches.
On the other hand, in comparison to shape spaces based on diffeomorphic mapping, FCM allows for fast processing of large-scale shape collections and is invariant under Euclidean motion, hence, not susceptible to any bias due to misalignment.
}

One possible and interesting way to proceed in the future is to replace the log-Euclidean metric with the affine-invariant one, which can be considered the natural metric on the symmetric positive-definite matrices.
\revised{Another interesting line of future work is to explore the feasibility of fully automatic computer-aided diagnostics based on advanced machine learning, e.g.\ by combining our shape representation with the approach in~\citet{vonTycowicz2020GCN} or utilizing the proposed flattening approach to transform three-dimensional problems into the well-known two-dimensional image-based deep learning setup.}

\begin{sloppypar}
\subsection*{Acknowledgments}
The authors are funded by the Deutsche Forschungsgemeinschaft (DFG, German Research 
Foundation) under Germany's Excellence Strategy – The Berlin Mathematics 
Research Center MATH+ (EXC-2046/1, project ID: 390685689). This work partly relies on data of the Osteoarthritis Initiative\footnote{The Osteoarthritis Initiative is a public-private partnership comprised of five contracts 
(N01-AR-2-2258; N01-AR-2-2259; N01-AR-2-2260; N01-AR-2-2261; N01-AR-2-2262) 
funded by the National Institutes of Health, a branch of the Department of Health and 
Human Services, and conducted by the OAI Study Investigators. Private funding partners 
include Merck Research Laboratories; Novartis Pharmaceuticals Corporation, 
GlaxoSmithKline; and Pfizer, Inc. Private sector funding for the OAI is managed by the 
Foundation for the National Institutes of Health. This manuscript was prepared using an OAI 
public use dataset and does not necessarily reflect the opinions or views of the OAI 
investigators, the NIH, or the private funding partners.}
\revised{and on data of the Alzheimer's Disease Neuroimaging Initiative (ADNI)\footnote{Data collection and sharing for this project was funded by the ADNI (National Institutes of Health Grant U01 AG024904) and DOD ADNI (Department of Defense award number W81XWH-12-2-0012). ADNI is funded by the National Institute on Aging, the National Institute of Biomedical Imaging and Bioengineering, and through generous contributions from the following: AbbVie, Alzheimer's Association; Alzheimer's Drug Discovery Foundation; Araclon Biotech; BioClinica, Inc.; Biogen; Bristol-Myers Squibb Company; CereSpir, Inc.; Cogstate; Eisai Inc.; Elan Pharmaceuticals, Inc.; Eli Lilly and Company; EuroImmun; F. Hoffmann-La Roche Ltd and its affiliated company Genentech, Inc.; Fujirebio; GE Healthcare; IXICO Ltd.;Janssen Alzheimer Immunotherapy Research \& Development, LLC.; Johnson \& Johnson Pharmaceutical Research \& Development LLC.; Lumosity; Lundbeck; Merck \& Co., Inc.;Meso Scale Diagnostics, LLC.; NeuroRx Research; Neurotrack Technologies; Novartis Pharmaceuticals Corporation; Pfizer Inc.; Piramal Imaging; Servier; Takeda Pharmaceutical Company; and Transition Therapeutics. The Canadian Institutes of Health Research is providing funds to support ADNI clinical sites in Canada. Private sector contributions are facilitated by the Foundation for the National Institutes of Health (www.fnih.org). The grantee organization is the Northern California Institute for Research and Education, and the study is coordinated by the Alzheimer's Therapeutic Research Institute at the University of Southern California. ADNI data are disseminated by the Laboratory for Neuro Imaging at the University of Southern California.}.}
Furthermore we are grateful for the open-access dataset FAUST~\cite{Bogo2014FAUST} as well as for the open-source software Deformetrica~\cite{Durrleman2014deformetrica}. \revised{Finally, we want to thank A.~Tack for providing us with cartilage thickness measurements for Fig.~\ref{fig:flatteningexample}. The present article is an 
extended version of \cite{ambellan2019surface} presented at MICCAI~2019.}

\end{sloppypar}

\appendix
\newcounter{ssubfigure}
\setcounter{subfigure@save}{0}
\renewcommand\thefigure{\thesubsection.\arabic{figure}}
\renewcommand\thesubfigure{\thesubsection.\arabic{figure}.\arabic{subfigure@save}}
\renewcommand\theequation{\thesubsection.\arabic{equation}}
\renewcommand\thetable{\thesubsection.\arabic{table}}

\section*{Appendix}  
\setcounter{figure}{0}
\setcounter{ssubfigure}{0}
\setcounter{subfigure@save}{0}
\renewcommand{\thesubsection}{\Alph{subsection}}
\subsection{Shape Reconstruction - Initialization and Local Step}
\label{appx:InitLocalStep}
\revised{Efficient shape reconstruction as outlined in Sec.~\ref{sec:numerics} is an essential part of this work and consists of an initialization and an iteration of global and local step until a solution is reached. Since the global step is basically solving a Poisson problem and also exemplified in~\citet{vonTycowicz2018efficient} we will omit it here.
To ease deeper inside on how to initialize the reconstruction algorithm and how to do the local step we provide two schematic visuals and explicitly determine the solution of the local step through direct calculation.\\
{\bf Initialization.} To initialize the algorithm we fix the pose of an arbitrarily chosen triangle $i_0$ by fixing its rotation $R_{i_0}$ relative to reference shape $\reference{S}$. Starting from triangle $i_0$ a spanning tree is determined to define a path through the dual graph, passing every triangle exactly once. 
}
 \begin{figure}[tbh]
 \centering
  \subfloat{%
  \begin{minipage}{0.966\linewidth}
      \resizebox{\linewidth}{0.7637\linewidth}{\input{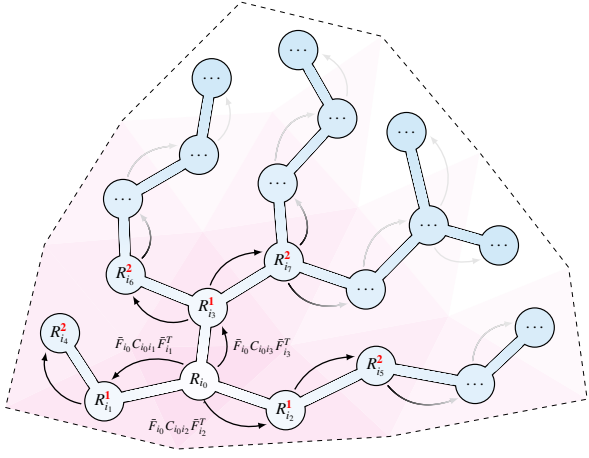}}%
      \end{minipage}%
  }
 \caption{\revised{Initialization procedure propagating an initial rotation $R_{i_0}$ along a pre-computed spanning tree across the data. The red digits indicate the number of propagation steps needed to reach a certain point.}} \label{fig:Initialization}

\end{figure}

\revised{\noindent
$R_{i_0}$ is propagated along this path employing the local integrability condition, viz.\ $R_j := R_{i \rightarrow j}=R_i\reference{F}_i C_{ij} \reference{F}_j^{T}$. Finally this procedure provides a field of extrinsic rotations $\{R_i\}$ to initialize the the local/global solver. Note that the algorithm will stop right after the first iteration, if the $\{C_{ij}\}$ form an integrable system. In that case the initialization is additionally invariant w.r.t. the choice of the fixed triangle and spanning tree. Figure~\ref{fig:Initialization} provides a schematic overview on the initialization process.\\
{\bf Local Step.} Within the local step of the reconstruction algorithm the deformation $\phi$ of $\reference{S}$ and hence the deformation gradient is fixed. We aim to find $R_i$ for every triangle $i$, s.t.\ it, mediated through $\{C_{ij}\}$, locally optimally accompanies $\phi$.  \guillemotright{}Local\guillemotleft{} hereby has to be understood as the one-ring triangle-neighborhood. This problem has a closed-form solution that we will work out within the following proposition via direct calculation on the optimization target.\\
~\\
\newtheorem{theo}{Proposition}
\newtheorem*{theoNo}{Proposition}
\begin{theoNo}
The Local Step within the shape reconstruction algorithm targeting the optimization problem:
\begin{align*}
    R_i = \argmin_{R\in \SO } \sum_{s\in\mathcal{N}_i} \norm{\nabla\phi|_{\reference{T}_s} - R\reference{F}_i C_{is}\reference{F}_{s}^{T} U_{s}}_F^2,
\end{align*}
where $\mathcal{N}_i$ is the set of indices belonging to edge neighbors of triangle $i$,
can be solved in closed form and the solution is unique.
\end{theoNo}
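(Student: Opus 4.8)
This is the classical orthogonal Procrustes problem (equivalently Wahba's problem) in the generalised, multi-term form, so the plan is to collapse the quadratic objective into a single \emph{linear} trace functional on $\SO$ and to solve that via a singular value decomposition; closed form is then immediate, and the only genuinely delicate point is the uniqueness assertion.

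First I would abbreviate $B_s := \reference{F}_i\,C_{is}\,\reference{F}_s^{T}\,U_s$ and $D_s := \nabla\phi|_{\reference{T}_s}$, so the objective reads $\sum_{s\in\mathcal{N}_i}\norm{D_s - R\,B_s}_F^2$. Expanding each summand as $\norm{D_s}_F^2 - 2\,\tr(R\,B_s D_s^{T}) + \norm{R\,B_s}_F^2$ and using $\norm{R\,B_s}_F = \norm{B_s}_F$ (since $R^{T}R = I$), all $R$-dependence sits in the cross term, so the minimisation is equivalent to
\begin{align*}
 R_i \;=\; \argmax_{R\in\SO}\ \tr(R\,M), \qquad M := \sum_{s\in\mathcal{N}_i} B_s D_s^{T}.
\end{align*}

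Next I would diagonalise $M = P\,\Sigma\,Q^{T}$ by an SVD with $\Sigma = \operatorname{diag}(\sigma_1\ge\sigma_2\ge\sigma_3\ge 0)$ and substitute $Z := Q^{T}R\,P$, which ranges over the orthogonal group and lies in $\SO$ exactly when $\det Z = \det(Q)\det(P)$. Because every diagonal entry of an orthogonal matrix lies in $[-1,1]$, $\tr(R\,M) = \tr(Z\,\Sigma) = \sum_i \sigma_i Z_{ii} \le \sigma_1 + \sigma_2 + \sigma_3$. If $\det(Q)\det(P) = +1$ the bound is attained at $Z = I$, i.e.\ $R_i = Q\,P^{T}$; if $\det(Q)\det(P) = -1$, feasibility forces $\det Z = -1$ and the best choice is $Z = \operatorname{diag}(1,1,-1)$, i.e.\ $R_i = Q\,\operatorname{diag}(1,1,-1)\,P^{T}$ with value $\sigma_1 + \sigma_2 - \sigma_3$ --- the familiar Kabsch/Umeyama/Markley reflection correction. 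Either way a closed-form expression for $R_i$ results.

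The main obstacle is establishing \emph{uniqueness} without extra hypotheses, since the construction above is unique only when $M$ has full rank and, in the reflection case, additionally $\sigma_2 > \sigma_3$ (otherwise the optimal $Z$ forms a one-parameter family in the lower $2\times 2$ block). Here I would exploit the geometric setting: $\phi$ is an orientation-preserving immersion, so $\det D_s > 0$, while $U_s$ is symmetric positive definite and $C_{is},\reference{F}_i,\reference{F}_s$ are orthogonal, making every summand $B_s D_s^{T}$ invertible; moreover the local step acts on configurations produced by the least-squares global step, for which $D_s$ is close to $R_i B_s$ (cf.\ the integrability relation $R_s = R_i\,\reference{F}_i C_{is}\reference{F}_s^{T}$ together with $D_s = R_s U_s$), whence $M = \sum_s B_s D_s^{T} \approx R_i^{T}\sum_s D_s D_s^{T}$ is a rotation times a symmetric positive-definite matrix. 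Consequently all $\sigma_i > 0$ and $\det(Q)\det(P) = +1$, which rules out both the rank-deficient and the reflection branch --- an orthogonal matrix with all diagonal entries equal to $1$ must be the identity --- and singles out $R_i = Q\,P^{T}$ as the unique minimiser; I would make this quantitative by bounding the fitting residual against the spectral gap of $\sum_s D_s D_s^{T}$.
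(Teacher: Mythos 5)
Your proposal is correct and follows essentially the same route as the paper: expand the squared Frobenius norm, discard the $R$-independent terms using orthogonal invariance, and reduce the local step to maximizing a linear trace functional $\tr(RM)$ over $\SO$ --- i.e.\ the orthogonal Procrustes problem. The paper then finishes by taking the polar decomposition $D_{\mathcal{N}_i}=R_{\mathcal{N}_i}U_{\mathcal{N}_i}$ of the (transposed) cross-term matrix and reading off $R_{\mathcal{N}_i}$ as the unique maximizer, whereas you use the SVD with the Kabsch/Umeyama sign correction; these are two computations of the same projection onto $\SO$, so the difference is cosmetic. The one substantive divergence is in the uniqueness discussion: the paper simply \emph{asserts} that $D_{\mathcal{N}_i}$ is nonsingular and orientation-preserving and invokes uniqueness of the polar factor, while you correctly flag that a sum of orientation-preserving summands need not itself be nonsingular or have positive determinant (e.g.\ $I+\mathrm{diag}(-1,-1,1)$ is a singular sum of two rotations), and you supply a geometric/perturbative argument for why the degenerate and reflection branches do not occur in this setting. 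Your version is therefore slightly more honest about where the closed-form-plus-uniqueness claim actually rests; neither you nor the paper gives a fully rigorous, hypothesis-free proof of nondegeneracy, but you at least name the condition ($\sigma_3>0$, and $\sigma_2>\sigma_3$ in the reflection branch) under which uniqueness genuinely holds.
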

\begin{proof}
For the sake of simplicity let $D_s=\nabla\phi|_{\reference{T}_s}$. We carry out a direct calculation utilizing the definition $\scal{A}{B}_F:= \tr{(A^TB)}$ and the trace's invariance under cyclic permutations:
\begin{align*}
    R_i &= \argmin_{R\in \SO } \sum_{s\in\mathcal{N}_i} \norm{D_{s} - R\reference{F}_i C_{is}\reference{F}_{s}^{T} U_{s}}_F^2\\
    &=\argmin_{R\in \SO } \sum_{s\in\mathcal{N}_i} \underbrace{\norm{D_s}_F^2}_{\text{const.}} - 2\scal{D_s}{R\reference{F}_i C_{is}\reference{F}_s^{T} U_s}_F + \underbrace{\norm{R\reference{F}_i C_{is}\reference{F}^{T}_s U_s}_F^2}_{\text{const.}} \\
    &=\argmax_{R\in \SO } \sum_{s\in\mathcal{N}_i}\tr{\left(D_s^T{R\reference{F}_i C_{is}\reference{F}_s^{T} U_s}\right)}\\
    &=\argmax_{R\in \SO }\sum_{s\in\mathcal{N}_i}\scal{D_sU_s^T\reference{F}_s C_{is}^T\reference{F}_i^{T}}{R}_F=\argmax_{R\in \SO }\big\langle D_{\mathcal{N}_i}, {R}\big\rangle_F
\end{align*}
Since $D_{\mathcal{N}_i}$ is a nonsingular and orientation-preserving matrix it can be uniquely decomposed via polar decomposition to ${R}_{\mathcal{N}_i}{U}_{\mathcal{N}_i}$, where ${R}_{\mathcal{N}_i}\in\SO$ and ${U}_{\mathcal{N}_i}\in \Sym$ s.t. 
\begin{align*}
{R}_{\mathcal{N}_i}=\argmax_{R\in \SO }\big\langle D_{\mathcal{N}_i}, {R}\big\rangle_F.
\end{align*}
\end{proof}

Figure~\ref{fig:LocalStep} schematically illustrates the underlying neighboring relations framing the local integrability constraints.
}

\begin{figure}[tbh]
\centering
  \subfloat{%
  \begin{minipage}{0.966\linewidth}
      \resizebox{\linewidth}{0.7637\linewidth}{\begin{tikzpicture}[scale= 0.8, rotate=0.0]
\coordinate (E1) at  (-2.87331 , -3.69042 );
\coordinate (E2) at  (-6.52804 , -2.81035 );   
\coordinate (E3) at  ( 1.64524 , -3.42601 );
\coordinate (E4) at  (-5.69997 ,  0.638607); 
\coordinate (E6) at  ( 5.4649  , 0.158519); 
\coordinate (E7) at  (-4.04656 ,  3.26233 );  
\coordinate (E8) at  ( 1.48398 ,  5.05326 );   
\coordinate (E9) at  ( 3.57431 ,  2.58582 );
\coordinate (E10) at (-1.51585 ,  5.77764 );
\coordinate (E11) at (-3.87319 ,  -1.53949);
\coordinate (E12) at (-0.451449,  -1.48617);
\coordinate (E13) at (-2.32946 ,  0.914473);
\coordinate (E14) at ( 2.91061 ,  -0.87626);
\coordinate (E15) at ( 0.982103,  1.4311  );
\coordinate (E16) at (-0.845718,  3.45062 );
\coordinate (E17) at  ( 5.86202 , -2.6336  );

\fill[backtriangle] (E12) -- (E1)  -- (E11) -- cycle;
\fill[backtriangle] (E11) -- (E1)  -- (E2)  -- cycle;
\fill[backtriangle] (E2)  -- (E4)  -- (E11) -- cycle;
\fill[backtriangle] (E12) -- (E3)  -- (E1)  -- cycle;
\fill[backtriangle] (E11) -- (E4)  -- (E13) -- cycle;
\fill[backtriangle] (E14) -- (E3)  -- (E12) -- cycle;
\fill[backtriangle] (E16) -- (E13) -- (E7)  -- cycle;
\fill[backtriangle] (E3)  -- (E14) -- (E17) -- cycle;
\fill[backtriangle] (E6)  -- (E14) -- (E17) -- cycle;
\fill[backtriangle] (E7)  -- (E13) -- (E4)  -- cycle;
\fill[backtriangle] (E9)  -- (E14) -- (E15) -- cycle;
\fill[backtriangle] (E6)  -- (E14) -- (E9)  -- cycle;
\fill[backtriangle] (E8)  -- (E16) -- (E10) -- cycle;
\fill[backtriangle] (E15) -- (E16) -- (E8)  -- cycle;
\fill[backtriangle] (E10) -- (E16) -- (E7)  -- cycle;
\fill[backtriangle] (E9)  -- (E15) -- (E8)  -- cycle;

\draw[dashed, very thin] (E12) -- (E1)  -- (E11) ;
\draw[dashed, very thin] (E11) -- (E2)  -- (E1)  ;
\draw[dashed, very thin] (E2)  -- (E4) ;
\draw[dashed, very thin] (E3)  -- (E1) ;
\draw[dashed, very thin] (E11) -- (E4) ;
\draw[dashed, very thin] (E3)  -- (E12);
\draw[dashed, very thin] (E14)  -- (E3) -- (E17) ;
\draw[dashed, very thin] (E6)  -- (E17) -- (E14) ;
\draw[dashed, very thin] (E7)  -- (E13) -- (E4) -- (E7) ;
\draw[dashed, very thin] (E6)  -- (E14) -- (E9) -- (E6) ;
\draw[dashed, very thin] (E16) -- (E8) -- (E10)  ;
\draw[dashed, very thin] (E8)  -- (E9);
\draw[dashed, very thin] (E10) -- (E16) -- (E7) -- (E10) ;
\draw[dashed, very thin] (E9)  -- (E15) -- (E8)  ;

\draw[fill=fronttriangle] (E12) -- (E13) -- (E15) -- cycle; %

\fill[c6] (E12) -- (E11) -- (E13); %
\fill[c6] (E12) -- (E15) -- (E14); %
\fill[c6] (E15) -- (E13) -- (E16); %

\draw (E12) -- (E11) -- (E13); %
\draw (E12) -- (E14) -- (E15); %
\draw (E13) -- (E16) -- (E15); %

\coordinate (tiBaryRef) at (barycentric cs:E12=.333,E13=.333,E15=.333);
\coordinate (tjBaryRef) at (barycentric cs:E12=.15,E11=.7,E13=.15);
\coordinate (tkBaryRef) at (barycentric cs:E12=.15,E15=.15,E14=.7);
\coordinate (tlBaryRef) at (barycentric cs:E15=.15,E13=.15,E16=.7);

\node at (tiBaryRef) {$R_i$};
\node at ([xshift=.1cm]tjBaryRef) {$R_{i\rightarrow j}$};
\node at (tkBaryRef) {$R_{i\rightarrow k}$};
\node at ([xshift=.07cm]tlBaryRef) {$R_{i\rightarrow l}$};

\node (Rk) at ($(tkBaryRef)!0.2!(tiBaryRef)$) {};
\node (RkRi) at ($(tkBaryRef)!0.78!(tiBaryRef)$) {};
\node (RlRi) at ($(tlBaryRef)!0.88!(tiBaryRef)$) {};
\node (RjRi) at ($(tjBaryRef)!0.88!(tiBaryRef)$) {};

\draw[latexnew-, arrowhead=0.15cm, dotted, thick]([xshift=-.1cm, yshift=.28cm]tkBaryRef.center) to [out=95,in=20] node[pos=.55, below] (Mk) {} ([xshift=-.5cm, yshift=.3cm]RkRi.east);
\node (MMk) at ([xshift=.08cm, yshift=-.4cm]Mk) {\begin{scriptsize}$\reference{F}_i C_{ik} \reference{F}_k^{T}$\end{scriptsize}};

\draw[latexnew-, arrowhead=0.15cm, dotted, thick]([xshift=-.25cm, yshift=-.2cm]tlBaryRef.west) to [out=210,in=160] node[pos=.55] (Ml) {} ([xshift=-.09cm, yshift=-.2cm]RlRi.west);
\node (MMl) at ([xshift=.8cm, yshift=.4cm]Ml) {\begin{scriptsize}$\reference{F}_i C_{il} \reference{F}_l^{T}$\end{scriptsize}};

\draw[latexnew-, arrowhead=0.15cm, dotted, thick]([xshift=.65cm, yshift=-.05cm]tjBaryRef.east) to [out=-10,in=270] node[pos=.55] (Mj) {} ([xshift=.25cm, yshift=.1cm]RjRi.south);
\node (MMj) at ([xshift=-.9cm, yshift=.15cm]Mj) {\begin{scriptsize}$\reference{F}_i C_{ij} \reference{F}_j^{T}$\end{scriptsize}};

\node (e) at ($(E12)!0.85!(E13)$) {};
\node (e1) at ($(E12)!0.65!(E13)$) {};
\draw[-latexnew, arrowhead=0.2cm] (E12) -- (e);
\draw (e1) -- (E13);

\node (e) at ($(E15)!0.8!(E12)$) {};
\node (e1) at ($(E15)!0.65!(E12)$) {};
\draw[-latexnew, arrowhead=0.2cm] (E15) -- (e);
\draw (e1) -- (E12);

\node (e) at ($(E13)!0.8!(E15)$) {};
\node (e1) at ($(E13)!0.65!(E15)$) {};
\draw[-latexnew, arrowhead=0.2cm] (E13) -- (e);
\draw (e1) -- (E15);

\end{tikzpicture}}%
      \end{minipage}%
  }
 \caption{\revised{Neighboring relations employed in the local step of the integration procedure. Rotations $R_i$ are connected by transition rotations $C_{ij}$} \label{fig:LocalStep}}
 \end{figure}
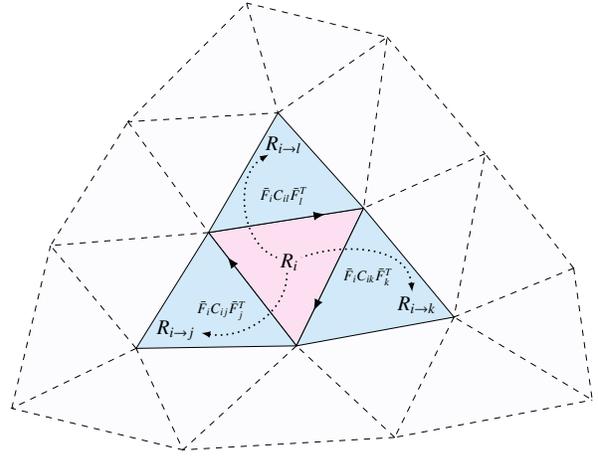

\subsection{Rotational Logarithm and Relative Transition Rotations}
\setcounter{figure}{0}
\label{app:TransRotDist}
\begin{figure*}[tbh]
\centering
\input{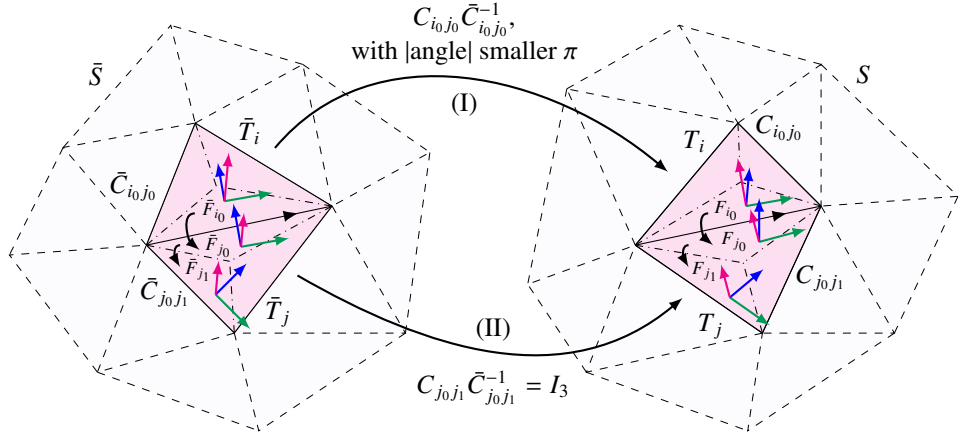}
\caption{\revised{Schematic summary of the argumentation to proof the proposition on relative transition rotations. The given construction allows to separate relative transition rotations into normal (I) and tangential (II) type.}} \label{fig:angleProp}
\end{figure*}
\revised{
As with other non-Euclidean approaches, existence and uniqueness of the intrinsic mean is only ensured for well-localized data.
In particular, for our representation this concerns the rotational components describing the changes in curvature. \revised{We would like to remark that this is a rather academic discussion as we did not encounter any example with critical disparity.
Indeed, even for the synthetic PIPE dataset representing a severe nonlinear deformation the relative transition rotations are located in a small neighborhood of radius $\nicefrac{5\pi}{23}$, see Fig.~\ref{fig:histogramm}.  
However, the following proposition explains how to (theoretically) control the {\it relative transition rotations} and thus how to avoid ambiguities regarding the rotational logarithm.

\begin{figure}[tbh]
    \fboxsep0pt 
    \center       
    \includegraphics[width=1.0\linewidth]{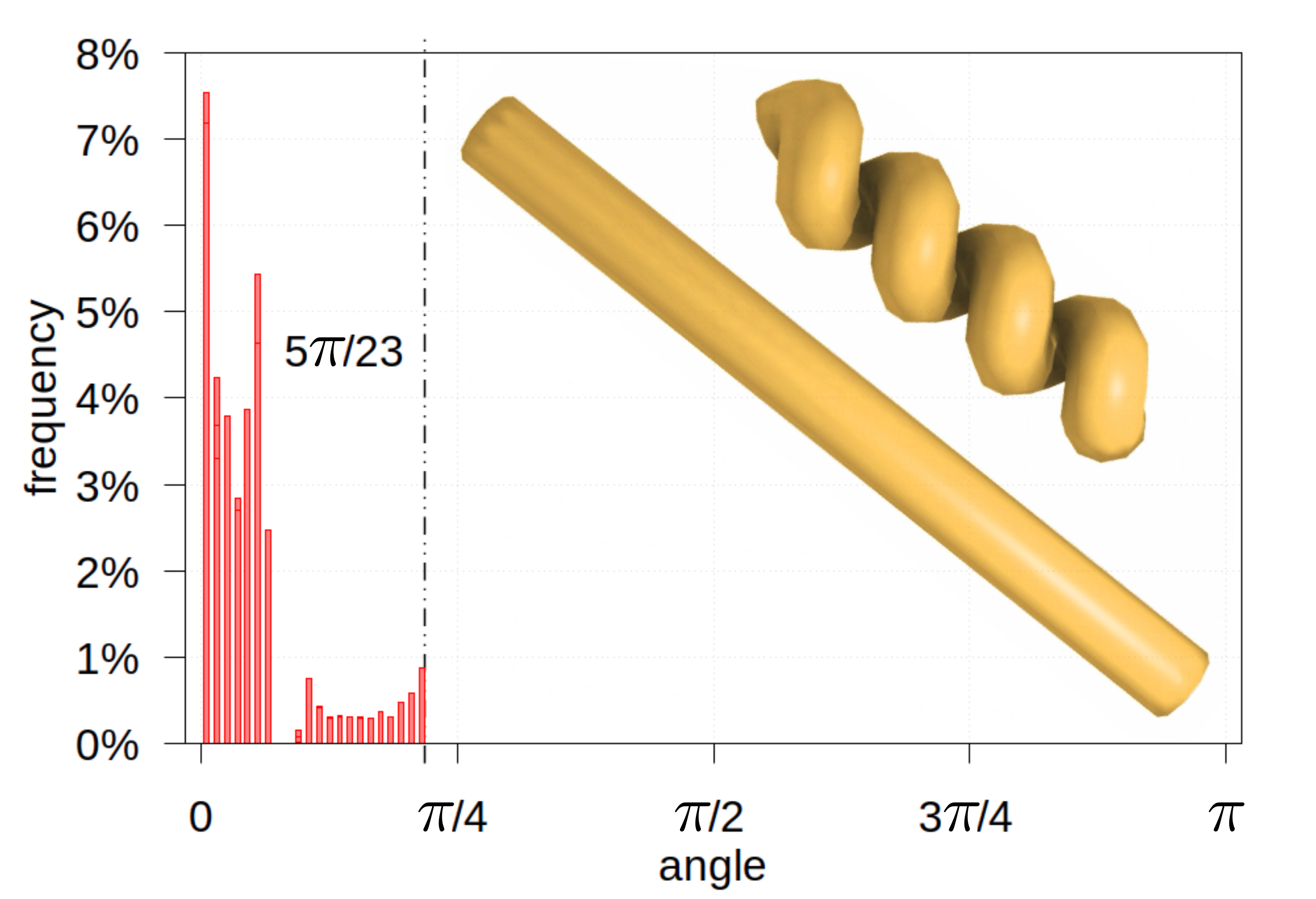}
 \caption{\revised{Histogram of angles between transition rotations of the PIPE shapes. The angles are all relatively small and far away from the critical region of angles larger than $\pm\pi$}.}
 \label{fig:histogramm} 
\end{figure}
}
\begin{theoNo}
For any given $n$ shapes ${S_1, \ldots, S_n}$ there exists a common discetization and a frame field, such that all relative transition rotations exhibit angles in $(-\pi, \pi)$.
\end{theoNo}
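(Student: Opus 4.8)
The plan is to build the common triangulation, the pose of the reference $\reference{S}$, and the frame field by hand, and then to check the angle bound edge by edge. I would begin by writing out what a relative transition rotation is in these terms: once a common triangulation and a frame field $\{\reference{F}_i\}$ are fixed, the transition rotation of shape $S_k$ across an inner edge $(i,j)$ is $C^{k}_{ij}=\reference{F}_i^{T}(R_i^{k})^{T}R_j^{k}\reference{F}_j$, with $R_i^{k}$ the rotational part of the triangle-wise-constant deformation gradient $\nabla\phi_k|_{\reference{T}_i}$, and $\reference{C}_{ij}=\reference{F}_i^{T}\reference{F}_j$. Hence the relative transition rotation $\rho^{k}_{ij}:=C^{k}_{ij}\reference{C}_{ij}^{-1}=\reference{F}_i^{T}(R_i^{k})^{T}R_j^{k}\reference{F}_i$ is conjugate to $(R_i^{k})^{T}R_j^{k}$, so its rotation angle equals that of $(R_i^{k})^{T}R_j^{k}$ and is \emph{independent} of the frame field. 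The frame field therefore only matters for casting the representation into the product-group form of Sec.~\ref{sec:groupStructure}; the angle bound has to be secured through the triangulation and the pose of $\reference{S}$.

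Next I would subdivide every triangle of an arbitrary common triangulation of $S_1,\dots,S_n$ by its incenter into three sub-triangles, and split the resulting inner edges into \emph{internal} edges (both incident sub-triangles descending from one original triangle) and \emph{crossing} edges (those descending from original inner edges). Since $\phi_k$ is affine on each original triangle, any two sub-triangles of a common parent carry the same deformation gradient, hence the same $R^{k}$; therefore $\rho^{k}\equiv I_3$ on every internal edge --- this is the ``tangential type'' (II) of Fig.~\ref{fig:angleProp} and is trivially in $(-\pi,\pi)$. What remains are the crossing edges, the ``normal type'' (I).

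For a crossing edge I would decompose $(R_i^{k})^{T}R_j^{k}=R^{\perp}R^{\parallel}$ into a rotation $R^{\perp}$ about an axis in the reference tangent plane of $\reference{T}_i$ (the minimal rotation carrying the deformed normal of $T_j^{k}$ onto that of $T_i^{k}$) times a residual in-plane twist $R^{\parallel}$ about $\reference{n}_i$. Because two faces of a non-degenerate immersed surface are never exactly antiparallel in their unit normals, $\angle R^{\perp}$ is bounded by the angle between the adjacent deformed normals, hence lies in $[0,\pi)$ uniformly in $k$. The real work is to control the product with $R^{\parallel}$, equivalently to steer clear of the resonance at $\pm\pi$ that occurs when the reference dihedral at an edge sits antipodally to that of some $S_k$: observing that $\angle\rho^{k}_{ij}$ depends on the reference essentially only through its dihedral angle at that edge (modulo $2\pi$, the shapes' own dihedral angles $\alpha_e^{k}$ being fixed), the condition $\angle\rho^{k}_{ij}<\pi$ for all $k$ asks the reference dihedral at $e$ to lie in the intersection of the $n$ open arcs $(\alpha_e^{k}-\pi,\alpha_e^{k}+\pi)$, which is a nonempty open arc because the $\alpha_e^{k}\in(0,2\pi)$ span strictly less than $2\pi$. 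Choosing the pose of $\reference{S}$ so that every crossing edge's dihedral falls into its arc, and developing the frame field across crossing edges (the incenter split being what makes that developed field well defined and consistent with the internal-edge bookkeeping), then forces every $\angle\rho^{k}_{ij}<\pi$ and closes the argument.

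The step I expect to be the main obstacle is the global realizability of this centered reference: the arcs are individually nonempty, but exhibiting a genuine (at least immersed) triangulated surface whose dihedral angle at \emph{every} edge lands in its prescribed arc is a Gauss--Bonnet-constrained problem. The honest fix I would pursue is to pass to a sufficiently fine common refinement, so that the required dihedral data become realizable up to a small, controllable angle defect, together with a quantitative version of the ``never antiparallel normals'' bound guaranteeing that $R^{\perp}R^{\parallel}$ never attains angle $\pi$; organizing this estimate on the developing map over the universal cover decouples the normal and tangential contributions, and it is exactly here that the incenter subdivision earns its keep, by keeping the whole argument local and hence immune to the frame-field singularities that a nonzero Euler characteristic would otherwise force.
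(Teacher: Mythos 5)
Your construction matches the paper's up to the point where the real work begins: the incenter subdivision, the split into internal edges (your ``tangential'' type, the paper's type (II)) and crossing edges (type (I)), and the observation that internal edges carry the identity because sub-triangles of one parent share a single deformation gradient are exactly the paper's argument. Your remark that $C_{ij}\reference{C}_{ij}^{-1}=\reference{F}_i^{T}R_i^{T}R_j\reference{F}_i$ is conjugate to $R_i^{T}R_j$, so that its angle is frame-independent, is correct and is a clean way of seeing that the frame field is only bookkeeping --- though it also means your closing claim that ``developing the frame field across crossing edges'' helps force the bound cannot be right by your own observation.

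For the crossing edges your argument does not close, and you say so yourself. The step ``choose the pose of $\reference{S}$ so that every crossing edge's dihedral falls into its arc'' demands a triangulated reference with prescribed dihedral angles at every edge; that is precisely the realizability problem you flag as the main obstacle, and the proposed repair (fine refinement plus a quantitative non-antipodality estimate on the universal cover) is a programme, not a proof. The reduction to dihedral angles is moreover an oversimplification: $R_i^{T}R_j$ also carries an in-plane contribution determined by the stretches $U_i,U_j$, so controlling the reference dihedral alone would not control $\angle(R_i^{T}R_j)$. The paper closes the case with one elementary idea you are missing: refine the common triangulation until, in the reference and in every $S_k$, the normals of adjacent triangles differ by less than $\nicefrac{\pi}{2}$, and use edge-adapted frames (first frame vector along the shared edge, the neighboring frame obtained by rotating about that edge), so that both $\reference{C}_{i_0j_0}$ and $C_{i_0j_0}$ are rotations of angle less than $\nicefrac{\pi}{2}$. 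The relative transition rotation is then a product of two rotations each of angle in $\left(-\nicefrac{\pi}{2},\nicefrac{\pi}{2}\right)$, and the half-angle composition formula
\begin{equation*}
\cos\left(\tfrac{\theta^{12}}{2}\right)=\cos\left(\tfrac{\theta^{1}}{2}\right)\cos\left(\tfrac{\theta^{2}}{2}\right)-\sin\left(\tfrac{\theta^{1}}{2}\right)\sin\left(\tfrac{\theta^{2}}{2}\right)\scal{v^{1}}{v^{2}}
\end{equation*}
yields $\cos(\theta^{12}/2)>0$, i.e.\ $\theta^{12}\in(-\pi,\pi)$. No dihedral prescription, no realizability question, and no decomposition into $R^{\perp}R^{\parallel}$ is needed; the reference is simply taken as given and refined.
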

}
\revised{
\begin{proof}
We poof the result for two shapes $\reference{S}$ and $S$ since the argumentation naturally extends to the whole $n$ shapes.
At first we have to control change in normal direction in regions of high curvature. We therefore simultaneously refine the triangulations until the angle between normals of any two neighboring triangles lies in $\left(-\nicefrac{\pi}{2}, \nicefrac{\pi}{2}\right)$. Form this point onwards the argumentation is additionally summarized in Fig.~\ref{fig:angleProp}.
The basic idea of the following construction is to separate normal from tangential difference and to argue independently on each.
We now subdivide every triangle $\reference{T}_l$ of $\reference{S}$ (analogously for ${T}_l$ of $S$) into three parts $\reference{T}_{l_0}, \reference{T}_{l_1}, \reference{T}_{l_2}$ by means of the incenter and the bisecting lines of the angles. 
Without loss of generality we can assume that $\reference{T}_{j_0}$ is neighboring $\reference{T}_{i_0}$ (thus $\reference{T}_{j}$ was already neighboring $\reference{T}_{i}$). We fix a frame $\reference{F}_{i_0}$ on $\reference{T}_{i_0}$ ensuring alignment of the first basis vector to the edge shared with $\reference{T}_{j_0}$. Frame $\reference{F}_{j_0}$ is now defined by rotating $\reference{F}_{i_0}$ around the common edge. This directly implies that $\reference{C}_{i_{0}j_{0}}$ realizes an angle with absolute value smaller than $\nicefrac{\pi}{2}$. The same holds for $C_{i_{0}j_{0}}$ since $F_l=R_l\reference{F}_l$ preserves alignment of the frames with the underlaying triangles and through the initial refinement we already ensured normal differing of less then $\pm\pi$.
We analogously define $\reference{F}_{i_k}$ if $\reference{T}_{i_k}$ has neighbors, if not, we simply set $\reference{F}_{i_k}=\reference{F}_{i_0}$.
This construction allows to explicitly differentiate two different types of relative transition rotations: type (I) that comes from normal differences like ${C}_{i_{0}j_{0}}\reference{C}^{-1}_{i_{0}j_{0}}$ and type (II) like ${C}_{j_{0}j_{1}}\reference{C}^{-1}_{j_{0}j_{1}}$ that is induced by tangential change.
Since $F_{j_0}=R_j\reference{F}_{j_0}$ and $F_{j_1}=R_j\reference{F}_{j_1}$ we see immediately that ${C}_{j_{0}j_{1}}=\reference{F}^{-1}_{j_0}R^{-1}_jR_j\reference{F}_{j_1}=\reference{C}_{j_{0}j_{1}}$ and thus ${C}_{j_{0}j_{1}}\reference{C}^{-1}_{j_{0}j_{1}}=I_3$.
To clarify (I) we strip some notation, more precisely, let $C^{1}, C^{2}$ be two transition rotations with angles $-\theta^{1}, \theta^{2}$ and axes $-v^{1}, v^{2}$, respectively realizing normal change only.
Then, the relative transition rotation is given by $C^{{12}}=C^{2} \cdot \left(C^{1}\right)^{-1}$ and of type (I).
Now, assuming $\theta^{1}, \theta^{2}\in \left(-\nicefrac{\pi}{2}, \nicefrac{\pi}{2}\right)$ and in light of
\begin{align*}
\cos\left(\dfrac{\theta^{{12}}}{2}\right)&=\cos\left(\dfrac{\theta^{1}}{2}\right)\cos\left(\dfrac{\theta^{2}}{2}\right) - \sin\left(\dfrac{\theta^{1}}{2}\right)\sin\left(\dfrac{\theta^{2}}{2}\right)\scal{v^{1}}{v^{2}}
\end{align*}
(cf.\ e.g.~\cite{altmann2005rotations}), it follows that the angle $\theta^{{12}}$ of the composite rotation does not exceed $(-\pi, \pi)$, hence is well-localized.
\end{proof}
}

\subsection{Classification with varying commensuration parameter}
\setcounter{figure}{0}
\label{appx:varyCommPara}
\begin{figure*}[ht]
 \fboxsep0pt 
  \subfloat[\revisedII{OAI data: As $\omega$ increases accuracy does and compactness decreases.}\label{fig:varyOmegaOAI}]{
	\begin{minipage}[c]{0.49\linewidth}
	   \centering
	   \includegraphics[width=1\textwidth]{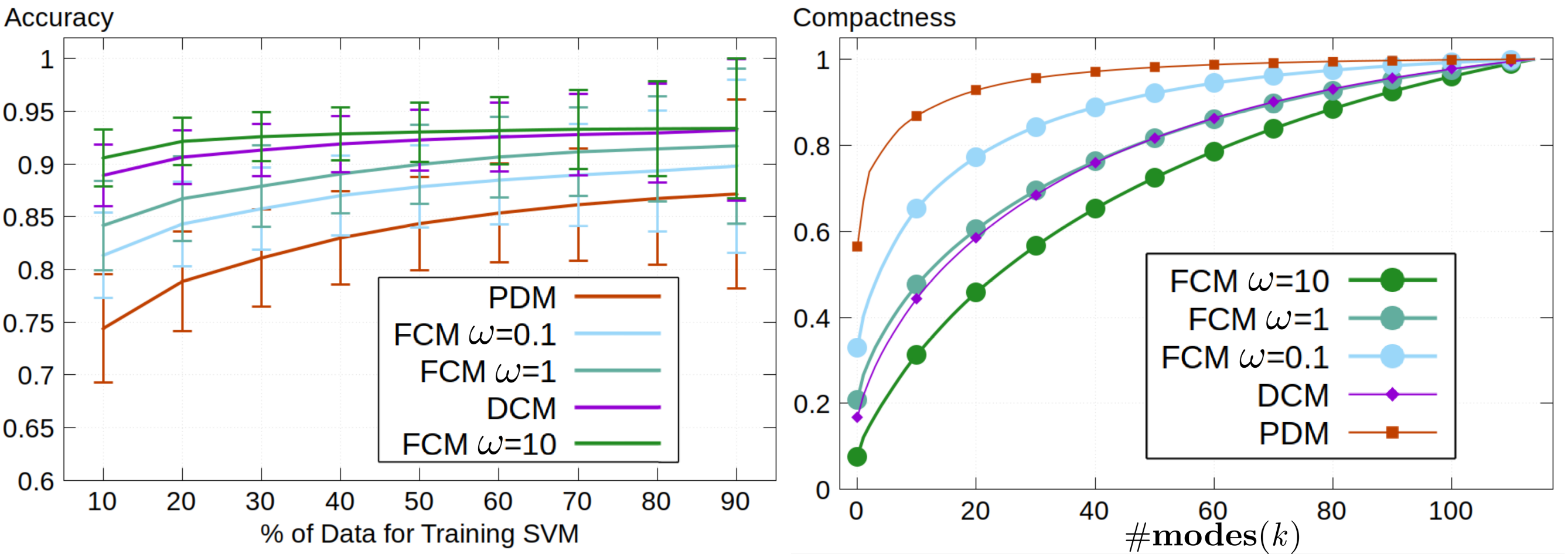}
	\end{minipage}}
 \hfill 	
  \subfloat[\revisedII{ADNI data: As $\omega$ increases compactness decreases, peak accuracy is obtained for $\omega\approx 0.98$.}\label{fig:varyOmegaADNI}]{
	\begin{minipage}[c]{0.49\linewidth}
	   \centering
	   \includegraphics[width=1\textwidth]{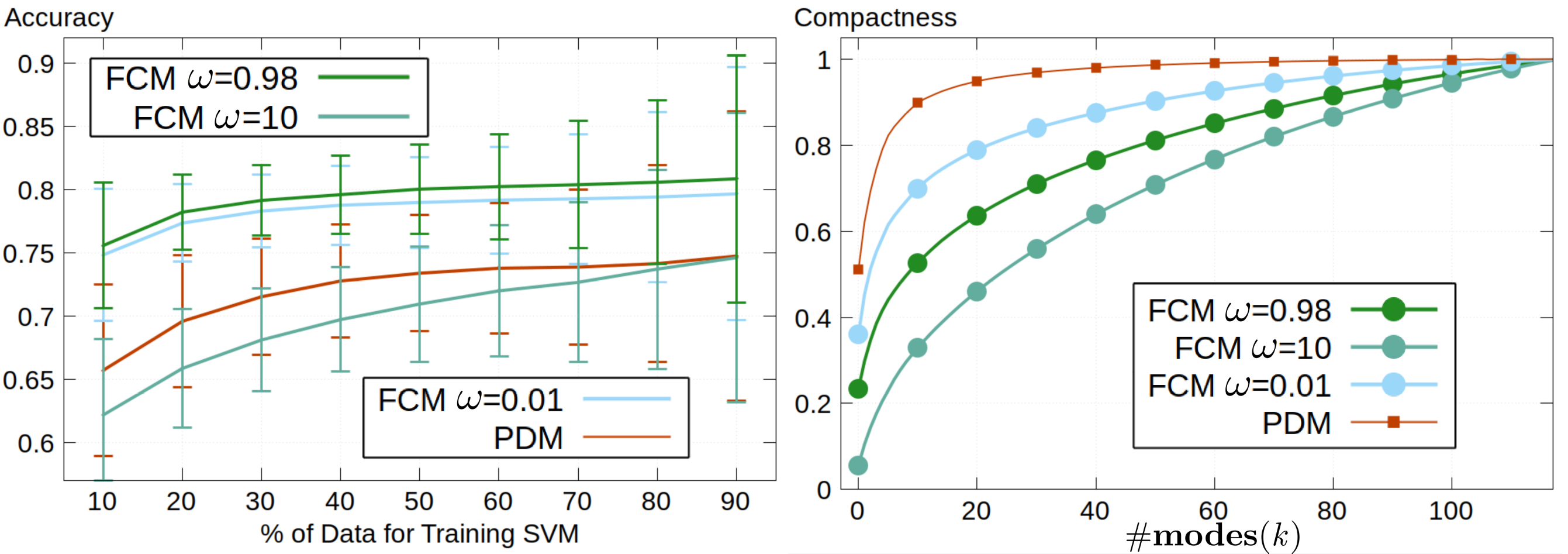}
	\end{minipage}}
\caption{\revisedII{Classification accuracy and compactness for varying commensuration parameter $\omega$.}}
\end{figure*}
\revised{As we are on the one hand applying shape models for disease classification purposes and on the other hand are in general interested in rather compact models, we did vary the metric commensuration parameter $\omega$ (cf.~Sec.~\ref{sec:discretization}) since it directly affects both. We studied the connection between choice of $\omega$, classification accuracy and model compactness.\par\noindent
{\bf OAI - OA Classification.}
As can be seen in Fig.~\subref{fig:varyOmegaOAI} the disease classification accuracy increases as $\omega$ increases. Looking at Eq.~\ref{eq:distance} this means putting higher weight on the rotational, thus curvature related term leads to higher classification accuracy. Additionally all examined choices of $\omega$ give FCM classification results with a superior performance compared to PDM.}
\revised{However, the development of model compactness is contrary to the classification accuracy as shown in Fig.~\subref{fig:varyOmegaOAI}. The larger $\omega$ gets, the less compact is the shape model and none of the examined commensuration parameter choices leads to a compactness as high as for the PDM.}
~\par
\noindent
\revised{{\bf ADNI - Alzheimer's Classification.}
A similar experiment for Alzheimer's classification reveals a rather different dependency on the commensuration parameter, see Fig.~\subref{fig:varyOmegaADNI}.
For values $\omega  \gtrsim 10$ the classification accuracy lies below the one achieved by the PDM. For $\omega \approx 0.98$ the peak performance is reached and for values below we note again slight decrease in performance.}
\revised{The compactness instead, as can be seen in Fig.~\subref{fig:varyOmegaADNI}, develops very similar as for the OAI dataset and is still for all $\omega$ below that of the PDM.}
\revised{As conclusion to this section we conjecture, that comparison of compactness might be interesting for models that are build on the same shape representation but it becomes less meaningful if different representations are compared. Furthermore, we find that the most compact models do not (necessarily) give the best classification accuracy. It appears that complex shape variation as it emerges from certain diseases tends to require a less compact encoding for an expressive but specific description.}

\subsection{Data Identifiers}
\setcounter{figure}{0} 
\setcounter{table}{0} 
\label{app:DataIds}
The data used within the given classification experiments is publicly available, we thus aim to facilitate reproduction of and comparison to our results. To this end we compiled labeled identifier lists of the data used in our experiments.
Regarding the OA classification all cases can be found as segmentation masks accompanying the publicly available \emph{OAI-ZIB} dataset\footnote{doi.org/10.12752/4.ATEZ.1.0}, whereas the Alzheimer's classification experiment relies on hippocampus segmentation masks that can be accessed as part of ADNI database\footnote{adni.loni.usc.edu}.
\begin{table}[tbh]
\label{tab:OAIids}
\centering
\caption{List of unique patient ids from the OAI database used in the OA classification experiment.}
{\footnotesize 
\begin{tabular}{lllcrrr}     \hline
        \multicolumn{3}{c}{Healthy (KL 0/1)}   &  & \multicolumn{3}{c}{Diseased (KL 4)}  \\ \hline
        9008561 & 9258563 & 9510418            &  & 9246518 & 9391984 & 9631713          \\ 
        9013798 & 9304351 & 9517914            &  & 9256759 & 9393987 & 9638953          \\ 
        9017909 & 9331053 & 9582487            &  & 9263504 & 9413071 & 9642550          \\ 
        9036770 & 9333574 & 9601162            &  & 9266394 & 9414291 & 9660708          \\ 
        9036948 & 9341699 & 9617689            &  & 9267719 & 9421492 & 9672573          \\ 
        9039744 & 9341903 & 9645577            &  & 9271965 & 9422381 & 9680800          \\ 
        9089627 & 9355112 & 9655592            &  & 9284505 & 9430102 & 9689922          \\ 
        9108461 & 9383004 & 9718992            &  & 9287216 & 9439428 & 9691663          \\ 
        9116298 & 9391372 & 9750072            &  & 9301332 & 9457359 & 9695135          \\ 
        9120941 & 9394136 & 9854269            &  & 9326657 & 9467278 & 9700341          \\ 
        9132486 & 9397088 & 9876530            &  & 9331465 & 9469318 & 9710479          \\ 
        9141244 & 9397976 & 9878765            &  & 9340139 & 9470313 & 9745458          \\ 
        9153509 & 9433408 & 9879069            &  & 9349261 & 9475286 & 9750090          \\ 
        9171766 & 9440417 & 9907090            &  & 9364366 & 9477175 & 9760079          \\ 
        9184495 & 9460287 & 9916140            &  & 9365968 & 9477358 & 9781749          \\ 
        9189553 & 9474901 & 9967815            &  & 9375317 & 9508335 & 9858216          \\ 
        9207016 & 9486748 & 9973322            &  & 9379276 & 9517311 & 9895555          \\ 
        9211049 & 9488834 & 9978579            &  & 9389580 & 9557454 & 9933836          \\ 
        9245519 & 9501871 & 9988421            &  & 9391061 & 9568504 & 9943227          \\ 
                & 9504627 &                    &  &         & 9604541 &                  \\ \hline
\end{tabular}
}
\end{table}
\begin{table}[tbh]
\label{tab:ADNIids}
\centering
\caption{List of unique scan ids from the ADNI database used in the Alzheimer's classification experiment.}
{\footnotesize 
\begin{tabular}{lllcrrr}     \hline
        \multicolumn{3}{c}{Cognitive Normal}   &  & \multicolumn{3}{c}{Alzheimer's Diagnosed}  \\ \hline
        10312 & 13681 & 17207            &  & 10064 & 14974 & 22310          \\ 
        10605 & 13717 & 17232            &  & 10468 & 15001 & 22938          \\ 
        10813 & 13737 & 17487            &  & 10568 & 15145 & 23375          \\ 
        10835 & 13893 & 17527            &  & 10764 & 15287 & 23446          \\ 
        10883 & 14104 & 18109            &  & 11633 & 15315 & 24659          \\ 
        10960 & 14488 & 18236            &  & 12000 & 15935 & 24672          \\ 
        11006 & 14513 & 18321            &  & 12365 & 16313 & 25082          \\ 
        11161 & 14559 & 18450            &  & 12375 & 16924 & 25357          \\ 
        11314 & 14818 & 18827            &  & 12381 & 17191 & 25455          \\ 
        11584 & 14959 & 18909            &  & 12402 & 17337 & 25763          \\ 
        11594 & 14991 & 18917            &  & 12468 & 18077 & 26038          \\ 
        11928 & 15079 & 19971            &  & 12583 & 18094 & 26136          \\ 
        11974 & 15527 & 20352            &  & 12836 & 18151 & 26143          \\ 
        12081 & 15727 & 20753            &  & 12952 & 18189 & 26314          \\ 
        12419 & 15789 & 21817            &  & 13839 & 18373 & 26431          \\ 
        12485 & 16048 & 22439            &  & 13976 & 18390 & 27061          \\ 
        12563 & 16099 & 24338            &  & 13990 & 19296 & 27414          \\ 
        12992 & 16553 & 25680            &  & 14199 & 19386 & 27584          \\ 
        13191 & 16759 & 25829            &  & 14629 & 19395 & 27673          \\ 
        13556 & 17131 & 26899            &  & 14699 & 21207 & 28133          \\ \hline
\end{tabular}
}
\end{table}

\bibliography{ms}

\end{document}